\renewenvironment{proof}[1][Proof]{\par\noindent\textbf{#1.} }{\hfill$\square$\par\vspace{1em}}
\pgfplotsset{compat=newest}
\newcommand{\realnumbers}{\mathbb{R}}
\newcommand{\realpositive}{\realnumbers_0^+}
\newcommand{\dsomething}[1]{\mathrm{d}#1}
\newcommand{\RLintegralInput}[2]{\operatorname{I}_{#2}^{#1}}
\newcommand{\RLderivativeInput}[2]{{}^\mathrm{RL}\operatorname{D}_{#2}^{#1}}
\newcommand{\CderivativeInput}[2]{{}^\mathrm{C}\operatorname{D}_{#2}^{#1}}
\newcommand{\AstderivativeInput}[2]{\operatorname{D}_{\ast #2}^{#1}}
\newcommand{\pVec}{\mathbf p}
\newcommand{\hMat}{\mathbf H}
\newcommand{\HillSoln}{y_\mathrm{Hill}}
\newcommand{\simSoln}{y_\mathrm{sim}}
\newcommand{\assp}{(A2)}
\newcommand{\assJ}{(A1)}
\newcommand{\epsC}{\varepsilon}
\begin{document}
	\mainmatter              
	%
%
\title{Hill-Type Stability Analysis of Periodic Solutions of Fractional-Order Differential Equations} 

\titlerunning{Stability of Periodic Solutions of FODEs}

\author{
	Paul-Erik Haacker\inst{1},
	Remco I. Leine\inst{1},
	Renu Chaudhary\inst{2},
	Kai Diethelm\inst{2},
	Andr\'e Schmidt\inst{1},
	Safoura Hashemishahraki\inst{2}
}

\authorrunning{Paul-Erik Haacker et al.}

\tocauthor{Paul-Erik Haacker, Remco I. Leine, Renu Chaudhary, Kai Diethelm, Andr\'e Schmidt, Safoura Hashemishahraki}

\institute{
	Institute for Nonlinear Mechanics, University of Stuttgart, 70569 Stuttgart, Germany\\
	\email{haacker@inm.uni-stuttgart.de},\\
	\texttt{https://www.inm.uni-stuttgart.de/en/}
	\and
	Faculty of Applied Natural Sciences and Humanities, Technical University of Applied Sciences Würzburg-Schweinfurt, 
	97421 Schweinfurt, Germany
}

	\maketitle              
	
	\begin{abstract}{This paper explores stability properties of {periodic solutions of} (nonlinear) fractional-order differential equations (FODEs). { As classical Caputo-type FODEs do not admit exactly periodic solutions, we propose a framework of Liouville-Weyl-type FODEs, which do admit exactly periodic solutions and are an extension of Caputo-type FODEs. }Local linearization around a periodic solution results in perturbation dynamics governed by a linear time-periodic differential equation. In the classical integer-order case, the perturbation dynamics is therefore described by Floquet theory, i.e. the exponential growth or decay of perturbations is expressed by Floquet exponents which can be assessed using the Hill matrix approach. For fractional-order systems, however, a rigorous Floquet theory is lacking. Here, we explore the { limitations} when trying  to extend Floquet theory and the Hill matrix method to linear time-periodic fractional-order differential equations (LTP-FODEs) as local linearization of nonlinear fractional-order systems. A key result of the paper is that such an extended Floquet theory can only assess exponentially growing solutions of LTP-FODEs. Moreover, we provide an analysis of linear time-invariant fractional-order systems (LTI-FODEs) with algebraically decaying solutions and show that the inaccessibility of decaying solutions through Floquet theory is already present in the time-invariant case.
}
		\keywords{ fractional derivative, fractional-order differential equation, (linear) time-periodic systems, Floquet theory, stability, Hill problem.}
	\end{abstract}
	{\smash{Notation:}}
	\begin{itemize}[label=$\cdot$]
		\item $\realpositive$ is the set of real, positive numbers.
        \item {For a tuple $x = (x_1, \dots,x_n) \in \realnumbers^n$, $||x||_2 = \sqrt{\sum_{k=1}^n x_k^2}$ denotes the 2-norm.}
		\item For a function $f:\realnumbers\to\realnumbers$, $f'$ denotes its first derivative.
		\item For a function $f:\realnumbers\supseteq M \to \realnumbers$, $||f||_\infty := \sup_{x\in M} |f(x)|$.
		\item $L^1[a,b]$ denotes the space of functions $f:[a,b]\to\realnumbers$ with $\int_a^b|x(t)|~\dsomething{t}<\infty$.
		\item $L^\infty[a,b]$ denotes the space of functions $f:[a,b]\to\realnumbers$ with $||f||_\infty<\infty$.
		\item $A^n[a,b]$ denotes the space of functions $f:[a,b]\to \realnumbers$ which are $(n-1)$-times differentiable and the $(n-1)$th derivative is absolutely continuous on $[a,b]$.
        \item {$AC[a,b]$ denotes the space of functions $f:[a,b]\to \realnumbers$ which are absolutely continuous on $[a,b]$. We say that a function $f$ is locally absolutely continuous on $(-\infty,0]$, if it is absolutely continuous on every finite subinterval. We denote this $f\in AC_\mathrm{loc}(-\infty,0]$.}
	\end{itemize}
	\section{Introduction}
	{We investigate the stability of periodic solutions of nonlinear fractional-order differential equations, focusing on the challenges of extending Floquet theory to linear time-periodic fractional-order differential equations (LTP-FODEs). We show that while a Hill-type approach can capture exponentially growing solutions, it fails to account for decaying ones. This limitation is traced back to linear time-invariant fractional-order systems, where solutions typically decay algebraically rather than exponentially.
	}
	
	{
		Recently, interest in fractional-order differential equations has grown in various fields. A number of dynamical processes with memory have proven to be better described using fractional-order differential equations, where integer-order models fail to provide low-degree-of-freedom representations. Among others, applications include viscoelastic materials, diffusion processes and control theory \cite{Das2011}. Such – possibly nonlinear – systems admit periodic solutions when periodically forced or in the presence of self-excitation. Common formulations of Caputo differential equations cannot have exactly periodic solutions \cite{Tavazoei2009}, which is closely related to the fact that Caputo derivatives and Riemann-Liouville integrals of periodic functions are only asymptotically periodic \cite{Area2016}. Studies \cite{Rand2010,Daibiri2015} concerned with finding steady-state solutions of differential equations involving Liouville-Weyl-type fractional-order derivatives have successfully applied harmonic balance methods. A linearization of the system dynamics about a periodic solution, describing the dynamics of small perturbations, yields a LTP-FODE. This motivates the study of the stability of such systems.
		
		Lyapunov-stability of equilibria and $\Omega$-limit sets is a central concept in the study of dynamical systems and has also been studied for fractional order systems. Classically, stability of linear time-invariant fractional-order differential equations can be studied through Mittag-Leffler functions, see \cite{Diethelm2010} and the references cited therein. As fractional-order systems typically admit algebraic decay \cite{CongEtAl2020}, the notion of Mittag-Leffler stability was introduced analogously to exponential stability \cite{LiEtAl2009}. Furthermore, as fractional-order derivatives are nonlocal, FODEs are similar to functional differential equations \cite{Hale1977}. This similarity has been exploited by \cite{Hinze2020,Hinze2020a} in mechanical systems with fractional-order damping, treating the system as a functional differential equation with infinite delay and finding Lyapunov-functionals through the infinite-state representation (also called diffusive representation or kernel compression), which is often used for efficient numerical approximations of fractional-order integrals \cite{Chaudhary2024}. LTI-FODEs perturbed by small nonlinearities were studied in \cite{CongEtAl2018}. That work also analyzes linear time-varying FODEs, but only under the assumption of sufficiently small time-dependent terms. To the best of our knowledge, no stronger results on stability of solutions of linear time-variant fractional-order Caputo differential equations are available.
		
		In classical integer-order models (i.e. first-order ordinary differential equations), stability of periodic solutions may be analyzed via the eigenvalues of the infinite-dimensional Hill matrix which correspond to the Floquet exponents, see \cite{Bayer2023a} and the references cited therein. These problems are understood quite deeply, with Floquet theory as their foundation. This theory may be applied in bifurcation analysis of periodic motion of mechanical systems, which is self-excited \cite{Horvath2022} or externally forced \cite{Youssef2025}. { For LTP-FODEs, no similar theory has been established, although several attempts for the development of a Floquet theory for fractional-order systems have been undertaken \cite{Rezazadeh2016,Iomin2023}.  In particular, the work \cite{Rezazadeh2016} assumes that the Mittag-Leffler function approximately satisfies a multiplicative identity, despite the fact that the associated error may become arbitrarily large. Further, in \cite{Rezazadeh2016} a theory entirely based on approximations is developed, which only holds exactly in the first-order case ($\alpha =1 $). However, parameter identification schemes applied to experimental data typically yield fractional orders between $0.1$ and $0.7$ \cite{Pawlak2021,HinzeExperiment,Kim2009,Stark2021}.} {The work \cite{Iomin2023} considers a Schrödinger LTP-FODE of Caputo-type. Therein, a solution form expressed by a series of Mittag-Leffler functions is considered; however, an analysis of asymptotic properties of the solution form as well as a Hill-type method to access characteristic multipliers is missing. Furthermore, the proposed form of solutions of \cite{Iomin2023} uses a basis of Mittag–Leffler functions whereas the time-periodic system matrix (the Hamiltonian in the terminology of \cite{Iomin2023}) is represented in harmonic form. The use of two different series impedes a Hill-type approach, since the Mittag-Leffler function and the exponential function do not share a multiplicative identity.}
        
        To reach a Floquet theory, results on fundamental solution operators for LTP systems are lacking. The more general case of linear time-varying systems was considered by \cite{Gomoyunov2020}, developing a fundamental solution matrix. Defining a monodromy operator and analyzing its spectrum, as done for functional differential equations \cite{Hale1977}, remains an open challenge in the study of LTP-FODEs. Previous studies performed transition curve analysis on the fractional-order Mathieu equation \cite{Rand2010,Shahroudi2024}, as well as on more general forms ~\cite{Daibiri2015}, also including first-order derivatives and time-delays. The latter work employs fractional-order derivatives with infinite memory, i.e. Liouville-Weyl-type derivatives, which is of particular relevance to our study.  However, the analysis of transition curves in the parameter space only gives information on stability boundaries and thus fails to completely characterize stability over a parameter space, nor does it give information about the type of bifurcation which may originate at the stability boundary. 

         The literature review shows, that the classical Caputo-type differential equations do not admit exactly periodic solutions, but do admit them as limit sets. In order to study the stability properties of such limit sets, one must first formulate an extension of the class of Caputo differential equations to also include periodic solutions. The Liouville-Weyl derivative has these properties and a rigorous formulation of an initial value problem along with the set of initial conditions has to be developed. This then opens paths to extend classical theories on periodic solutions to fractional-order systems. For example, fundamental tools such as the Poincaré--Bendixson Theorem \cite{HirschEtAl2013}, which provides conditions for the existence of periodic solutions, need to be developed in the fractional-order case. Further, when given a periodic solution, possibly obtained using the harmonic balance method \cite{Rand2010,Daibiri2015,Shahroudi2024}, linearly perturbed dynamics can be studied to assess stability in the  hyperbolic case. Therefore, a Floquet-type theory on LTP-FODEs needs to be developed and the existence of Floquet exponents needs to be studied. For application of this theory to numerical stability analysis, methods like the Hill method should be extended to the fractional-order case.
        
		The aim of this paper is to study how Floquet theory can be extended to LTP-FODEs and how their Floquet exponents can be computed. Therefore, we first establish a formulation of a suitable initial value problem, which does admit periodic solutions. In order to understand the potential of Floquet theory in LTP systems, we study the subset of LTI systems, where explicit solutions may be constructed. 
        
        The main merit of this paper is the insight that exponentially decaying Floquet solutions (i.e. the ones related to asymptotically stable equilibria or stable manifolds of saddle equilibria) do not exist for LTP-FODEs, but exponentially growing Floquet solutions may be observed and assessed via a Hill-type approach. Parts of the research of this work were presented in a preliminary form {in reference} \cite{Haacker2025}, namely the fractional Hill problem was presented as well as results on linear time-invariant FODEs. The novel contributions of the present work are the formulation of the system class in Section \ref{sec_FODEs}, the analysis of decaying solutions in Section \ref{sec_LTI} as well as the proof of Theorem \ref{thm_fracHill} and the following Propositions.
	}
	
	The present paper is organized as follows: Mathematical preliminaries from fractional calculus as well as Hill's method for first-order LTP systems are briefly presented in  Section \ref{sec_preliminaries}. Section \ref{sec_FODEs} introduces the system class studied in the present paper, providing a framework suitable to study periodic solutions. The subclass of LTI systems is considered in Section \ref{sec_LTI}, where diverging solutions can be assessed though an eigenvalue problem and decaying solutions are studied by other means. A similar phenomenon appears in LTP systems in Section \ref{sec_LTP}. We construct a Hill-type problem and are able to trace decaying trajectories. Finally, we give an outlook on open problems in Section \ref{sec_outlook}.
	
	\section{Preliminaries}
	\label{sec_preliminaries}
	\subsection{Fractional Calculus }

	{
		We briefly recall some concepts of fractional calculus and refer the reader to the books \cite{Diethelm2010,Kilbas2006} for further details. Several definitions of fractional derivatives exist, each suited to different mathematical and physical contexts. Given a function $f \in L^1[a,b]$, the Riemann--Liouville (RL) fractional integral of order $\alpha > 0$ is defined as
		\begin{align}
			(I_{a}^\alpha f)(t) = \frac{1}{\Gamma(\alpha)} \int_a^t (t - \tau)^{\alpha - 1} f(\tau)\,d\tau,
		\end{align}
		where $\Gamma(\cdot)$ is the Gamma function. { In this work, we consider derivatives of fractional orders $\alpha \in (0,1)$.} The RL fractional derivative of order $\alpha \in (0,1)$ is given by
		\begin{align}
			\left(\RLderivativeInput{\alpha}{a} f\right)(t) = \left(\operatorname D ^1 {I_{a}^{1 - \alpha}} f\right)(t).
		\end{align} The RL derivative requires fractional-order initial conditions, which limits its practical applicability in initial value problems.
		
		To overcome this limitation of the RL derivative, the Caputo derivative is often used in applied settings \cite{Hinze2020,Rand2010,Daibiri2015}. For $\alpha \in (0, 1)$  and a function $f \in AC[a,b]$, it is defined as
		\begin{align}
			\left(\CderivativeInput{\alpha}{a} f\right)(t) = \left( {I_{a}^{1 - \alpha}}\operatorname D ^1 f\right)(t)= \frac{1}{\Gamma(1 - \alpha)} \int_a^t (t - \tau)^{ - \alpha} f'(\tau)\,d\tau,
		\end{align}
		and has the advantage that it allows for standard (integer-order) initial conditions. In the context of systems of differential equations, the fractional derivative is typically applied component-wise. For a vector-valued function $x(t) \in \mathbb{R}^n$, the fractional derivative is understood as
		\begin{align}
			\operatorname D^\alpha x(t) := \left(\operatorname D^\alpha x_1(t), \operatorname D^\alpha x_2(t), \dots, \operatorname D^\alpha x_n(t)\right)^\top,
		\end{align}
		with each $x_i$ differentiated individually.

        { A different formulation (see \cite{Diethelm2010} and the references therein) of the Caputo derivative reads
        \begin{align}
            \left(\AstderivativeInput{\alpha}{a} f\right)(t) = \left(\RLderivativeInput{\alpha}{a} [f-f(0)]\right)(t),
        \end{align} for the case of $\alpha \in (0,1)$. If $f\in AC[a,b]$, then the two formulations coincide $\AstderivativeInput{\alpha}{a}f = \CderivativeInput{\alpha}{a}f$ almost everywhere \cite{Diethelm2010}.}
		
		The solutions to linear fractional-order differential equations often involve the Mittag-Leffler function, which generalizes the exponential function. The two- and one-parameter Mittag-Leffler matrix function for $\alpha,\beta>0$ and $Z \in \mathbb C ^{m\times m}$ are defined by
		\begin{align}
			E_{\alpha,\beta}(Z) = \sum_{k=0}^{\infty} \frac{Z^k}{\Gamma(\alpha k + \beta)},\quad  E_\alpha(Z) =E_{\alpha,1}(Z),
		\end{align}
        respectively. In the case $\alpha = \beta= 1$, the Mittag-Leffler function reduces to the exponential function, i.e., $E_1(z) = e^z$.
		
		Finally, the Liouville--Weyl derivative of order $\alpha > 0$ is defined as
		\begin{align}
			(\CderivativeInput{\alpha}{-\infty} f)(t) = \frac{1}{\Gamma(1 - \alpha)}  \int_{-\infty}^t (t - \tau)^{- \alpha} f'(\tau)\,d\tau.
		\end{align} Here, we use the Caputo-type definition $\CderivativeInput{\alpha}{a} \to \CderivativeInput{\alpha}{-\infty}$ for $a \to -\infty$. One may show for sufficient assumptions on integrability and differentiability of $f$, that it holds that $\CderivativeInput{\alpha}{-\infty}f = \RLderivativeInput{\alpha}{-\infty}f$. 

	}
\subsection{{The Hill Method for First-Order Systems}}
{ Hill's method is a frequency-based approach to assess the stability of periodic solutions \cite{Zhou2002}. It has been applied to first order ordinary differential equations as well as delay differential equations \cite{Stepan2011}. Here, we summarize the approach for ordinary differential equations, which we develop for FODEs in Section \ref{sec_LTP}. For time-periodic first-order systems
	\begin{align}
		\dot \xi = \tilde f(t,\xi(t)),
	\end{align}
	where $\tilde f(t,\cdot) = \tilde f(t+T,\cdot)$ for all $t$ with some $T>0$, there may exist a known periodic solution $\xi_p(t) = \xi_p(t+T)$ and the desire to assess its stability properties. To do so, one can introduce the perturbation $\delta(t) = \xi(t)-\xi_p(t)$ and find its linearized dynamics to be
	\begin{align}
		\dot \delta (t) = \tilde J (t) \delta (t),
	\end{align}
	where $\tilde J(t) = \frac{\partial \tilde f}{\partial \xi}(t,\xi_p(t))$ is $T$-periodic. Assuming $\tilde J$ to be available in Fourier form and knowing that, by Floquet theory, the solution is of the form
	\begin{align}
		\delta(t) = \exp(\mu t)\tilde p(t),
	\end{align}
	where $\mu$ is called the Floquet exponent and $\tilde p$ is $T$-periodic, the Floquet exponents are eigenvalues of the infinite-dimensional Hill-matrix
	\begin{align}\label{eq_Hillmat}
		\hMat_\infty &= \begin{pmatrix}
			\ddots & \vdots & \vdots & \vdots & \\
			\dots & \tilde J_0 + i\omega I & \tilde J_{-1} & \tilde J_{-2} & \dots\\
			\dots & \tilde J_1 & \tilde J_0  & \tilde J_{-1} & \dots\\
			\dots & \tilde J_{2} & \tilde J_{1} & \tilde J_0 -i\omega I & \dots\\
			& \vdots & \vdots & \vdots & \ddots
		\end{pmatrix}.
	\end{align}
	Herein, $\tilde J_k$ are the Fourier coefficients of $\tilde J(t) = \sum_{k=-\infty}^{+\infty}\tilde J_k \exp(ik\omega t)$, $\omega = 2\pi / T$ and $I$ is the identity matrix of appropriate dimension. After truncation to finite dimension $N$, one may either compute all eigenvalues of the truncated Hill-matrix $\hMat_N$ and employ a sorting technique to extract Floquet exponents or perform the recently introduced Koopman-Hill projection \cite{Bayer2023a}, avoiding sorting and directly approximating Floquet-multipliers.}

\section{Liouville-Weyl Differential Equations: A Class of Systems with Periodic Solutions} \label{sec_FODEs}
We call
\begin{subequations}\label{eq_nonlinearIVP}
	\begin{align}\label{eq_nonlinearFODE}
		\CderivativeInput{\alpha}{-\infty}x(t) = f(t,x(t)),\quad t \geq t_0
	\end{align}
	a nonlinear Liouville-Weyl differential equation of fractional differentiation order $\alpha \in (0,1)$, where the function $f:[t_0,\infty)\times\realnumbers^n\to\realnumbers^n$ is required to be at least locally Lipschitz in its second argument and continuous in the first argument.
	The differential equation \eqref{eq_nonlinearFODE} is equipped with the initial condition
	\begin{align} \label{eq_nonlinearFODE_IC}
		x(t) = x_0(t),\quad t\leq t_0
	\end{align}
\end{subequations} 
	to form an initial value problem \eqref{eq_nonlinearIVP}. { For some $t_1>t_0$, we call a function $x:(-\infty,t_1) \to \realnumbers^n$ a \emph{strong solution} of the initial value problem \eqref{eq_nonlinearIVP} if it is continuous, coincides with the initial condition $x_0$ on $(-\infty,t_0]$ and satisfies the differential equation \eqref{eq_nonlinearFODE} for $t_0 < t < t_1$.} The initialization function $x_0$ is assumed to be an element of the function space $X(-\infty,t_0]$ given in Definition \ref{def_InitialSpace} below. Therein, the local boundedness of $x_0'$ at $t_0$ is indeed necessary, as we will see in Example \ref{ex_initialFunctions} \ref{item_unbouded_der_at0} below.
\begin{definition}\label{def_InitialSpace}
		The space of initial functions is defined as 
			\begin{align}
				X(-\infty,t_0] := &\Bigl\{ x_0 \in {AC_\mathrm{loc}(-\infty,t_0]} \cap L^\infty (-\infty,t_0]~\vert~ \exists \,\eta > 0: x_0' \in L^\infty [t_0-\eta,t_0]\Bigr\}
			\end{align}
		where {$AC_\mathrm{loc}(a,b]$} and $L^\infty(a,b]$ denote the function spaces of { locally absolutely continuous functions} and functions with a finite supremum-norm, respectively, on the interval $(a,b]$. 
	\end{definition}
    In order to {assess} the existence and uniqueness of solutions of problem \eqref{eq_nonlinearIVP}, we will split the Liouville-Weyl derivative 
    \begin{align} \label{eq_SplitDerivative1}
	\CderivativeInput{\alpha}{-\infty}x(t)= \int_{-\infty}^t\frac{(t-\tau)^{-\alpha}}{\Gamma(1-\alpha)}x'(\tau)\dsomething{\tau} = \mathcal{F}x_0(t) + \CderivativeInput{\alpha}{t_0}x(t).
\end{align}
into the Caputo derivative of lower bound $t_0$ and a forcing term $\mathcal F x_0$, which is given in Definition \ref{def_Forcing} below. The properties of the forcing term $\mathcal F x_0$ along with classical results on Caputo fractional differential equations allows to establish existence and uniqueness of solutions of problem \eqref{eq_nonlinearIVP}. We formalize this in Proposition \ref{prop_FromAstToC}.
	\begin{definition}\label{def_Forcing}
		 The functional $\mathcal{F}x_0:[t_0,\infty) \to \realnumbers^n$ of an initial function $x_0:(-\infty,t_0]\to \realnumbers^n$ is defined as
		\begin{align}\label{eq_ForcingIVP}
			\mathcal{F}x_0(t) := \int_{-\infty}^{t_0}\frac{(t-\tau)^{-\alpha}}{\Gamma(1-\alpha)}x_0'(\tau)\dsomething{\tau},
		\end{align} to which we will refer as the forcing term of the initial condition.
	\end{definition}
	
	The forcing term $\mathcal F x_0(t)$ of the initial condition is continuous and has the following decay property, which we will use in the analysis of a linear time-invariant version of \eqref{eq_nonlinearFODE} in Section \ref{sec_LTI}. The decay bound may be regarded as a fading-memory property, where the differential equation``forgets" events of the distant past. 
	\begin{lemma}\label{lemma_forcing}
		For an initial condition $x_0\in X(-\infty,t_0]$, {its forcing term $\mathcal F x_0$ is continuous on $[t_0,\infty)$} and may be bounded according to
		\begin{align}\label{eq_forcingBoundInftyNormNew}
			||\mathcal F x_0(t)||\leq C(t-t_0+\eta)^{-\alpha},\quad t\geq t_0
		\end{align} where $C>0$ is given by
		\begin{align}
			C = \frac{1}{\Gamma(1-\alpha)}\left(2||x_0||_\infty + \frac{\eta}{1-\alpha}\sup_{t\in [t_0-\eta,t_0]}||x_0'(t)||\right).
		\end{align}
	\end{lemma}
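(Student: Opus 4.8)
The plan is to prove the two claims — continuity of $\mathcal F x_0$ on $[t_0,\infty)$ and the decay bound \eqref{eq_forcingBoundInftyNormNew} — separately, but both hinge on splitting the memory integral into a ``recent past'' piece on $[t_0-\eta,t_0]$, where $x_0'$ is bounded by hypothesis, and a ``distant past'' piece on $(-\infty,t_0-\eta]$, where we must integrate by parts to transfer the derivative off $x_0'$. First I would write $\mathcal F x_0(t) = \frac{1}{\Gamma(1-\alpha)}\bigl(\int_{-\infty}^{t_0-\eta} + \int_{t_0-\eta}^{t_0}\bigr)(t-\tau)^{-\alpha}x_0'(\tau)\dtau$. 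The recent-past integral is immediately controlled: since $\|x_0'\|\le \sup_{[t_0-\eta,t_0]}\|x_0'\|=:S$ there and $(t-\tau)^{-\alpha}\le (t-t_0)^{-\alpha}$ for $\tau\le t_0\le t$, integrating the kernel over a window of length $\eta$ yields a bound of the form $\frac{1}{1-\alpha}\bigl[(t-t_0+\eta)^{1-\alpha}-(t-t_0)^{1-\alpha}\bigr]S$, which I would coarsen to $\frac{\eta}{1-\alpha}(t-t_0+\eta)^{-\alpha}S$ by the mean value theorem (the factor $(t-t_0+\eta)^{-\alpha}$ bounding the derivative of $u\mapsto u^{1-\alpha}$), matching the second term in $C$.

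For the distant-past integral I would integrate by parts, using that $\tau\mapsto (t-\tau)^{-\alpha}$ has antiderivative $\frac{1}{1-\alpha}(t-\tau)^{1-\alpha}$ (up to sign) in $\tau$, so that
\begin{align}
\int_{-\infty}^{t_0-\eta}(t-\tau)^{-\alpha}x_0'(\tau)\dtau
= \Bigl[\tfrac{-1}{1-\alpha}(t-\tau)^{1-\alpha}x_0(\tau)\Bigr] \text{ evaluated, plus a boundary-and-remainder term.}
\end{align}
The key point is that $x_0\in L^\infty$ and $(t-\tau)^{1-\alpha}\to\infty$ as $\tau\to-\infty$, so a naive integration by parts does not obviously converge; instead I would integrate by parts against the \emph{difference} kernel or, more cleanly, bound the distant-past contribution directly by observing that although $(t-\tau)^{1-\alpha}$ grows, after integrating by parts the remaining volume integral $\int_{-\infty}^{t_0-\eta}(t-\tau)^{-\alpha}x_0(\tau)\dtau$ need not appear — rather one uses that $\int_{-\infty}^{t_0-\eta}$ of the $\tau$-derivative of $(t-\tau)^{1-\alpha}$ telescopes and the $L^\infty$ bound on $x_0$ gives the boundary term $\frac{1}{1-\alpha}(t-(t_0-\eta))^{1-\alpha}\|x_0\|_\infty = \frac{1}{1-\alpha}(t-t_0+\eta)^{1-\alpha}\|x_0\|_\infty$; this still carries a growing power. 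The resolution, and the step I expect to be the main obstacle, is that the bound as stated has a $(t-t_0+\eta)^{-\alpha}$ decay, so the distant-past piece must ultimately be estimated not by parts but by recognizing $\mathcal F x_0(t)$ itself as a convergent improper integral of a bounded function against an integrable-at-infinity weight only after a more careful cancellation; concretely, I suspect the intended route rewrites the distant past via the substitution making the kernel's total mass finite, yielding the clean coefficient $\frac{2\|x_0\|_\infty}{\Gamma(1-\alpha)}(t-t_0+\eta)^{-\alpha}$ in $C$ (the factor $2$ absorbing both a boundary term and a tail estimate).

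For continuity, once the bound shows the integral converges absolutely and uniformly on compact subsets of $(t_0,\infty)$, continuity on the open interval follows from dominated convergence applied to the kernel's continuity in $t$; continuity \emph{at} $t_0$ is the delicate endpoint, where I would verify that the singularity $(t-\tau)^{-\alpha}$ as $t\downarrow t_0$ with $\tau$ near $t_0$ remains integrable because $\alpha\in(0,1)$, again invoking the local boundedness of $x_0'$ on $[t_0-\eta,t_0]$ — precisely the hypothesis the authors flag as necessary via Example \ref{ex_initialFunctions}. The overall strategy is thus: establish absolute convergence and the explicit bound first, then deduce continuity as a corollary of the uniform integrability that the bound certifies.
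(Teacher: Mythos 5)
There is a genuine gap in your treatment of the distant-past integral $\int_{-\infty}^{t_0-\eta}(t-\tau)^{-\alpha}x_0'(\tau)\,\mathrm d\tau$, which is the heart of the lemma. You integrate by parts in the wrong direction: by antidifferentiating the kernel you produce the growing power $(t-\tau)^{1-\alpha}$, correctly observe that this diverges as $\tau\to-\infty$, and then only speculate about how the stated bound could nevertheless emerge, rather than proving it. The correct move --- which the paper makes --- is to integrate by parts the other way: put the antiderivative on $x_0'$ (yielding the bounded function $x_0$) and the derivative on the kernel (yielding $\alpha(t-\tau)^{-\alpha-1}$, which is integrable at $-\infty$). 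The boundary term $(t-\tau)^{-\alpha}x_0(\tau)$ then vanishes as $\tau\to-\infty$ because the exponent $-\alpha$ is negative; the finite boundary term contributes $\|x_0\|_\infty(t-t_0+\eta)^{-\alpha}$, and the remaining integral contributes another $\|x_0\|_\infty(t-t_0+\eta)^{-\alpha}$ after evaluating $\alpha\int_{-\infty}^{t_0-\eta}(t-\tau)^{-\alpha-1}\,\mathrm d\tau=(t-t_0+\eta)^{-\alpha}$. This is exactly where the factor $2$ in $C$ comes from: your guess that it absorbs a boundary term plus a tail estimate is right, but no argument is supplied, and the bracket you do write down, $\frac{-1}{1-\alpha}(t-\tau)^{1-\alpha}x_0(\tau)$, is not the boundary term of any valid integration by parts of this integrand (it antidifferentiates both factors at once).

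Two further points. First, for the recent-past piece your inequality $(t-t_0+\eta)^{1-\alpha}-(t-t_0)^{1-\alpha}\le\eta(t-t_0+\eta)^{-\alpha}$ is true, but the mean value theorem does not deliver it: the intermediate point satisfies $\xi^{-\alpha}\ge(t-t_0+\eta)^{-\alpha}$, which is the wrong direction for an upper bound, and the usable MVT estimate $\eta(1-\alpha)(t-t_0)^{-\alpha}$ blows up at $t=t_0$. The paper instead divides by $(t-t_0+\eta)^{1-\alpha}$ and reduces the claim to $\zeta^{1-\alpha}\ge\zeta$ for $\zeta\in[0,1)$. Second, your continuity argument rests on absolute convergence of $\int_{-\infty}^{t_0-\eta}(t-\tau)^{-\alpha}x_0'(\tau)\,\mathrm d\tau$ followed by dominated convergence; but membership in $X(-\infty,t_0]$ only guarantees boundedness of $x_0'$ near $t_0$ and boundedness of $x_0$ globally, not integrability of $x_0'$ on the distant past, so the required dominating function need not exist. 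The paper's continuity proof again routes the distant past through the same integration by parts, so that only the bounded function $x_0$ appears under the integral, and then estimates the difference $\mathcal F x_0(t+\varepsilon)-\mathcal F x_0(t)$ directly. Both halves of the lemma therefore hinge on the one step your proposal leaves open.
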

{ 
The proof of Lemma \ref{lemma_forcing} is given in Appendix \ref{appendix}.}
As motivated in equation \eqref{eq_SplitDerivative1}, one may split the Liouville-Weyl derivative into Caputo derivative and forcing term. Thus, a { strong} solution of the Liouville-Weyl initial value problem \eqref{eq_nonlinearIVP} always satisfies a Caputo initial value problem
\begin{align}\label{eq_rewrittenIVPrecastNew}
	\CderivativeInput{\alpha}{t_0}x(t) &= f(t,x(t))-\mathcal Fx_0(t) =: f_0(t,x(t)),\quad t \geq t_0, \quad x(t_0) = x_0(t_0).
\end{align}{ Typically, the $\AstderivativeInput{\alpha}{0}$-type differential equations are considered instead of $\CderivativeInput{\alpha}{0}$-type formulations, since the latter shows problems on the space $C[0,h]$, but the two formulations coincide on $AC[0,h]$, see \cite[Section 2.3]{Webb2025}. We formalize this for our case in the Proposition \ref{prop_FromAstToC} below. Consider the following example of $\CderivativeInput{\alpha}{0}$-type problems on $C[0,h]$.
{
\begin{example}
    Let $x:[0,1]\to\realnumbers$ be the Cantor ternary function \cite[pp.\ 96ff.]{GO2003}.
    This function is continuous, monotonically increasing, differentiable almost everywhere,
    and satisfies $x'(t)=0$ a.~e. Thus, $\CderivativeInput{\alpha}{0}x(t)=0$ for $t\in [0,1]$ and $x$ is a strong solution of the initial value problem $\CderivativeInput{\alpha}{0}x(t)=0$, $x(0)=0$. By a strong solution we mean a continuous function satisfying the differential equation and initial condition. 
\end{example}
}
To exclude this odd behavior, we may instead study the $\AstderivativeInput{\alpha}{0}$-type initial value problem
\begin{align}\label{eq_rewrittenIVP_Ast}
	\AstderivativeInput{\alpha}{t_0}x(t) &= f(t,x(t))-\mathcal Fx_0(t) =: f_0(t,x(t)),\quad t \geq t_0, \quad x(t_0) = x_0(t_0).
\end{align}
Since $f$ is continuous in its first variable and locally Lipschitz in its second variable, the same properties hold for $f_0$, because Lemma \ref{lemma_forcing} guarantees that the forcing term $\mathcal F x_0$ is continuous. We thus have existence of \emph{strong} continuous solutions in the sense of Theorem 6.5 of \cite{Diethelm2010}. These solutions may or may not coincide with solutions of the $\CderivativeInput{\alpha}{0}$-type problem \eqref{eq_rewrittenIVPrecastNew}. Of course, since the two operators $\CderivativeInput{\alpha}{0}$ and $\AstderivativeInput{\alpha}{0}$ coincide almost everywhere in the case of absolutely continuous functions, we have the following statement.
\begin{proposition}\label{prop_FromAstToC}
    Consider the unique solution $x\in C[t_0,t_0+h)$ for some $h>0$ of the $\AstderivativeInput{\alpha}{0}$-type problem \eqref{eq_rewrittenIVP_Ast}. If $x$ is absolutely continuous on $[t_0,t_0+h)$, then $x$ satisfies the $\CderivativeInput{\alpha}{0}$-type problem \eqref{eq_rewrittenIVPrecastNew} and thus also the Liouville-Weyl problem \eqref{eq_nonlinearIVP} almost everywhere on $[t_0,t_0+h)$.
\end{proposition}
We remark that, in the case when $f$ is linear in its second argument and autonomous in its first, as the solutions are given by a basis of Mittag-Leffler functions $E_\alpha(\lambda (\cdot)^\alpha)$, the solutions are always absolutely continuous. However, as the following example shows, the assumptions on $f$ being continous in its first argument and Lipschitz in its second, may lead to the $\AstderivativeInput{\alpha}{0}$-type problem having solutions which are differentiable nowhere, and thus not absolutely continuous.
\begin{example}
Consider the Weierstrass function \cite{Samko1994}
	\[
	W(t)=\sum_{k=0}^{\infty}\frac{1}{2^k}\sin(2^k\pi t), \quad t\in [0,1],
	\]
	which is continuous but nowhere differentiable on $[0,1]$. This function $W$ is H\"older continuous with exponent $\beta$ for all $\beta \in (0,1)$ and $W(0)=0$. Following the results of \cite{Samko1994}, $W$ thus has a continuous Caputo fractional derivative $\AstderivativeInput{\alpha}{0} W$ of order $\alpha$ for all $\alpha \in (0,1)$. Then, the function $W$ can be seen as a solution of the initial-value problem
	\begin{align}\label{eq_WeierstrassDE}
		\AstderivativeInput{\alpha}{0} x(t)=f(t), \quad t\in [0,1], \quad x(0)=0
	\end{align}
	with the continuous function $f(t)={}^{\rm C}\!D_0^\alpha W(t)$. Therefore, the solution of the differential equation \eqref{eq_WeierstrassDE} is not absolutely continuous. Notice, that $\CderivativeInput{\alpha}{0}W = \RLintegralInput{1-\alpha}{0}\operatorname D^1 W$ does not exist, as $\operatorname{D}^1 W$ also does not exist. The initial value problem \eqref{eq_WeierstrassDE} can be viewed as a situation in which the solutions of the  $\AstderivativeInput{\alpha}{0}$-type problem \eqref{eq_rewrittenIVP_Ast} and $\CderivativeInput{\alpha}{0}$-type problem \eqref{eq_rewrittenIVPrecastNew} do not coincide.
\end{example}
}

To give an impression of the proposed space of initial functions $X(-\infty,t_0]$, the following examples are provided.

\begin{example}\label{ex_initialFunctions} Consider the following functions $x_0:(-\infty,0]\to \realnumbers$, which may or may not be in $X(-\infty,0]$.
	\begin{enumerate}[label=(\roman*)]
		\item \label{item_exp-t}The function $x_0(t) = \exp(-t)$ does not belong to $X(-\infty,0]$,  as it is not bounded on $(-\infty,0]$. Its forcing term does not exist for any $t\geq 0$. Indeed, it holds that 
		\begin{align}
			\mathcal F x_0(t) &= -\int_{-\infty}^0\frac{(t-\tau)^{-\alpha}}{\Gamma(1-\alpha)}\exp(-\tau)\dsomething{\tau}= -\frac{\exp(-t)}{\Gamma(1-\alpha)}\int_{t}^\infty\frac{\exp(s)}{s^{\alpha}}\dsomething{s},
		\end{align} where the latter integral, in which the substitution $s = t -\tau$ has been employed, is unbounded. In fact, one may show that for existence of $\mathcal F x_0$, it is neccessary that $x_0(t) = O(|t|^{\alpha -\delta})$ for $t\to -\infty$ and some $\delta>0$.
		\item \label{item_exp+t} In contrast, it holds that $x_0 \in X(-\infty,0]$ for $x_0(t)= \exp(t)$, where its forcing term can be shown to read $\mathcal F x_0(t) = \exp (t) \Gamma(1-\alpha,t)/\Gamma(1-\alpha)$, where $\Gamma(\cdot,\cdot)$ is the upper incomplete gamma function.
		\item \label{item_mittag_leffler}$x_0 \in X(-\infty,0]$ for $x_0$ given by \begin{align}
			x_0(t)=\begin{cases}
				1, & t<-1\\
				E_\alpha(-(t+1)^\alpha),&-1\leq t<0
			\end{cases}
		\end{align}
		The function $x_0$ is continuous and bounded everywhere and its derivative exists everywhere except at $t=-1$ and is bounded in the neighborhood of $0$. {As it is continuously differentiable everywhere but at $t=-1$, with the singularity at $t=-1$ being integrable, it is also absolutely continuous.}
		\item \label{item_unbouded_der_at0}$x_0 \notin X(-\infty,0]$ for $x_0$: \begin{align}
			x_0(t) = \begin{cases}
				|t|^\alpha,&t\in (-1,0]\\
				1,&t\leq -1
			\end{cases},
			\quad x_0'(t) = \begin{cases}
				-\alpha|t|^{\alpha-1},&t\in (-1,0)\\
				0,&t\leq -1
			\end{cases}
		\end{align}
		Notice that $x_0'(t)$ grows unboundedly as $t\to 0-$. The forcing $\mathcal F x_0$ at time $t=0$ is given by
		\begin{align}
			\mathcal F x_0(0)&=\int_{-\infty}^0\frac{(-\tau)^{-\alpha}}{\Gamma(1-\alpha)}x_0'(\tau)\dsomething{\tau}\\
			&= \frac{-\alpha}{\Gamma(1-\alpha)}\int_{-1}^0{(-\tau)^{-\alpha}}{}(-\tau)^{\alpha-1}\dsomething{\tau}= \frac{-\alpha}{\Gamma(1-\alpha)}\int_{0}^1s^{-1}\dsomething{s},
		\end{align} where the latter integral is unbounded and $\mathcal F x_0(t)$ is thus is not defined at $t=0$.
        \item \label{item_constant}$x_0 \in X(-\infty,0]$ for $x_0$ given by \begin{align}
			x_0(t)=\begin{cases}
				1, & t<-1\\
				-x,&-1\leq t<0
			\end{cases},
			\quad x_0'(t) = \begin{cases}
				0, & t<-1\\
				-1,&-1\leq t<0
			\end{cases}
		\end{align} and one may explicitly find $\mathcal Fx_0(t) = \Gamma(2-\alpha)^{-1}(t^{1-\alpha}-(t+1)^{1-\alpha})$.
		\item \label{item_sine} $x_0 \in X(-\infty,0]$, for $x_0(t)= \sin(t)$.
	\end{enumerate}
\end{example}

The example \ref{item_exp-t} illustrates that exponentially decaying initial functions are not feasible, as these grow unboundedly in negative time. We will also observe in the next section that decaying solutions of linear versions of \eqref{eq_nonlinearFODE} possess an algebraic decay instead of an exponential decay. Exponential growth in case \ref{item_exp+t} is indeed possible and will be observed in the two following sections. In example \ref{item_mittag_leffler}, the initialization function $x_0$ is a typical trajectory generated by a linear fractional-order Caputo differential equation with finite lower bound. Although its first derivative is undefined at time $t=-1$, its forcing term is well defined. Example \ref{item_unbouded_der_at0} is an example showing that a singularity of $x_0'$ at $t_0$ can lead to nonexistence of $\mathcal F x_0$. This particular example of $x_0$ is bounded and uniformly continuous, showing that the space of initial functions of \eqref{eq_nonlinearFODE} needs to be chosen differently than in \cite{Hinze2020}. Example \ref{item_constant} allows for an explicit computation of the forcing term. As this work is motivated by the study of periodic solutions, example \ref{item_sine} is a periodic initial function. In fact, the Liouville-Weyl differential equation \eqref{eq_nonlinearFODE} can possibly posess a periodic solution, in contrast to fractional differential equations with finite lower bound \cite{Tavazoei2009}. Indeed, a fractional-order Caputo derivative with finite lower bound of a periodic function is not periodic \cite{Area2016}. A minimal example of a system with a periodic solution is given in the following. 

\begin{example}
	Consider the linear system $\CderivativeInput{\alpha}{-\infty}x(t)=Ax(t)$ for $t\in \realnumbers$ where
	\begin{align}
		A=\begin{bmatrix}
			\cos(\alpha \pi/2) & \sin(\alpha \pi /2) \\
			-\sin(\alpha \pi /2) & \cos(\alpha \pi /2) 
		\end{bmatrix} \in \realnumbers^{2\times 2}
	\end{align} with eigenvalues $\operatorname{eig} A = \exp(\pm i\alpha \pi /2)$ and $\alpha \in (0,1)$. The system admits the periodic solution 
	\begin{align}
		x(t) = \begin{bmatrix}
			\sin(t)- \frac{\cos(t)}{2\tan(\alpha\pi/2)}\\
			\frac{\sin(t)}{2\tan(\alpha\pi/2)}+\cos(t)
		\end{bmatrix}.
	\end{align} 
\end{example}

We intend to study stability properties of linear cases of the system \eqref{eq_nonlinearFODE} in the subsequent sections. We give the stability definition here, which is equivalent to the one of \cite{Hinze2020} and the classical literature on functional differential equations \cite{Hale1977}.
\begin{definition}[Stability of equilibria of autonomous systems]\label{def_stability}
	Consider the trivial equilibrium $x^\ast = 0$ of the system \eqref{eq_nonlinearFODE} in the autonomous case $f(t,x)=f(x)$ assuming w.l.o.g. $f(0) = 0$.
	 We call the trivial equilibrium $x^\ast = 0$ stable, if for all $\varepsilon>0$ there exists a $\delta >0$ such that the following holds:
	\begin{align}
		||x_0||_\infty <\delta\quad \Longrightarrow \quad ||x(t)||_2<\varepsilon \quad \forall t\geq 0,
	\end{align}
    {where $||r||_2$ denotes the 2-norm of a number $r\in \realnumbers^n$.}
\end{definition}
	\section{Linear Time-Invariant Fractional-Order Systems} \label{sec_LTI}
 In order to demonstrate the difficulties in finding analytical solutions of fractio\-nal-order Caputo differential equations with infinite memory, we first consider linear time-invariant differential equations. Recall the classical initial value problem with zero lower bound \cite{Diethelm2010}
\begin{align}\label{eq_CaputoDE_ZeroLowerBound}
{\AstderivativeInput{\alpha}{0}} z (t) = Az(t),\quad t\geq 0,\quad	z(0) = z_0 \in \realnumbers^n,
\end{align}
 where $\alpha \in (0,1)$ and $A\in \realnumbers^{n\times n}$.
It is well-known that for \eqref{eq_CaputoDE_ZeroLowerBound}, the Ansatz
\begin{align}
	\label{eq_ansatz_mittagLeffler} z(t) = E_\alpha(\lambda t^\alpha)v_z,\quad t\geq 0
\end{align}
yields (in the semisimple case) a complete basis of solutions of \eqref{eq_CaputoDE_ZeroLowerBound}, where $E_\alpha$ denotes the one parameter Mittag-Leffler function and $v_z\in \realnumbers^n$. Its Caputo derivative reads ${\AstderivativeInput{\alpha}{0}}z(t)=\lambda z(t)$ \cite{Diethelm2010}, yielding the characteristic equation
\begin{align}\label{eq_charEqn_MittagLeffler}
	\det (A-\lambda I) =0.
\end{align} Solutions $\lambda$ of the characteristic equations \eqref{eq_charEqn_MittagLeffler} are directly related to the stability properties of the trivial solution. This is stated in the following Theorem taken from \cite[Theorem 7.20]{Diethelm2010}  where the definitions of stability terms are given in the same reference.

\begin{theorem}
	Consider system \eqref{eq_CaputoDE_ZeroLowerBound} and let $\lambda_j$ be the eigenvalues of $A$ for $j = 1,\dots, n$. The trivial solution $z(t)= 0$ for all $t\geq 0$ is
	\begin{enumerate}[label=(\alph*)]
		\item asymptotically stable if and only if $|\arg \lambda_j|>\alpha \pi/2$ for all $j$.
		\item unstable if there exists $\lambda_j$ for which $|\arg\lambda_j|<\alpha \pi /2$.
	\end{enumerate}
\end{theorem}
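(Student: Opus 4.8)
The plan is to represent the solution of \eqref{eq_CaputoDE_ZeroLowerBound} explicitly through the matrix Mittag-Leffler function and then to read off stability from the large-time asymptotics of the scalar Mittag-Leffler function evaluated along the eigenvalues of $A$. First I would establish that the unique solution of the initial value problem is $z(t) = E_\alpha(At^\alpha)z_0$, which follows from the Ansatz \eqref{eq_ansatz_mittagLeffler} together with the identity $\CderivativeInput{\alpha}{0}E_\alpha(\lambda t^\alpha) = \lambda E_\alpha(\lambda t^\alpha)$ already recalled in the text. To decouple the dynamics I would pass to the Jordan canonical form $A = PJP^{-1}$, so that $E_\alpha(At^\alpha) = P\,E_\alpha(Jt^\alpha)\,P^{-1}$; since $P$ is invertible and constant, boundedness and decay of the solution are equivalent to boundedness and decay of the block-diagonal matrix $E_\alpha(Jt^\alpha)$. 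In the semisimple case this reduces entirely to the scalar modes $E_\alpha(\lambda_j t^\alpha)$, while a Jordan block of size $m$ introduces the derivatives $E_\alpha^{(k)}(\lambda_j t^\alpha)$ for $k=0,\dots,m-1$, multiplied by powers of $t^\alpha$.

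The analytical core is the classical asymptotic expansion of the Mittag-Leffler function for $\alpha \in (0,1)$. I would invoke the two-sector asymptotics: for $z\to\infty$ with $z = \lambda t^\alpha$ and $t>0$ one has $\arg z = \arg\lambda$ and $|z|=|\lambda|t^\alpha\to\infty$. When $|\arg\lambda| < \alpha\pi/2$ the argument $z$ lies in the exponential sector and
\begin{align}
	E_\alpha(\lambda t^\alpha) = \frac{1}{\alpha}\exp\!\left(\lambda^{1/\alpha} t\right) + O\!\left(t^{-\alpha}\right),
\end{align}
whereas when $|\arg\lambda| > \alpha\pi/2$ the argument lies in the algebraic sector and
\begin{align}
	E_\alpha(\lambda t^\alpha) = -\frac{1}{\Gamma(1-\alpha)}\,\lambda^{-1}t^{-\alpha} + O\!\left(t^{-2\alpha}\right).
\end{align}

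From these expansions the two claims follow. For part (b), if some eigenvalue satisfies $|\arg\lambda_j| < \alpha\pi/2$, then $\lambda_j^{1/\alpha} = |\lambda_j|^{1/\alpha}\exp(i\arg(\lambda_j)/\alpha)$ has strictly positive real part, because $|\arg(\lambda_j)/\alpha| < \pi/2$; hence the corresponding mode grows like $\exp(\mathrm{Re}(\lambda_j^{1/\alpha})\,t)$, and choosing $z_0$ aligned with the associated eigenvector shows that arbitrarily small initial data produce unbounded trajectories, i.e. instability in the sense of Definition \ref{def_stability}. For the forward direction of part (a), if every eigenvalue satisfies $|\arg\lambda_j| > \alpha\pi/2$ then every scalar mode decays like $t^{-\alpha}$, and bounding the Jordan-block contributions amounts to checking that this algebraic decay survives multiplication by the polynomial-in-$t^\alpha$ factors and the derivatives $E_\alpha^{(k)}$, which still decay algebraically in the same sector. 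Conversely, asymptotic stability forces $|\arg\lambda_j| > \alpha\pi/2$ for all $j$: an eigenvalue with $|\arg\lambda_j| < \alpha\pi/2$ yields growth by the above, while one in the boundary position $|\arg\lambda_j| = \alpha\pi/2$ renders $\lambda_j^{1/\alpha}$ purely imaginary and produces a bounded but non-vanishing oscillatory mode, precluding decay to zero.

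The step I expect to be the main obstacle is the rigorous treatment of the non-semisimple case: one must control the derivatives $E_\alpha^{(k)}(\lambda t^\alpha)$ and the accompanying factors $t^{\alpha k}$ arising from the Jordan blocks, and verify that in the algebraic sector these combined terms still tend to zero, so that algebraic decay is not destroyed by the polynomial prefactors, while in the exponential sector the exponential factor dominates any algebraic prefactor. A secondary technical point is ensuring the sector boundaries are handled consistently, so that the strict inequalities in the hypotheses align exactly with the exponential-versus-algebraic dichotomy and the excluded boundary case $|\arg\lambda_j| = \alpha\pi/2$ is correctly identified as stable but not asymptotically stable.
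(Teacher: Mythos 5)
Your proposal is correct and follows the standard argument for this result, which the paper itself does not prove but imports verbatim as Theorem 7.20 of \cite{Diethelm2010}: represent the solution as $z(t)=E_\alpha(At^\alpha)z_0$, reduce via the Jordan form, and apply the two-sector asymptotics of $E_\alpha$ (exponential growth for $|\arg\lambda|<\alpha\pi/2$ since $\mathrm{Re}\,\lambda^{1/\alpha}>0$, algebraic decay $O(t^{-\alpha})$ for $|\arg\lambda|>\alpha\pi/2$), with the Jordan-block prefactors $t^{\alpha k}E_\alpha^{(k)}(\lambda t^\alpha)$ controlled exactly as you describe. The only caveat is your side remark that the boundary case $|\arg\lambda_j|=\alpha\pi/2$ is "stable but not asymptotically stable" --- that holds only for semisimple boundary eigenvalues --- but this does not affect the two claims actually asserted in the theorem.
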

Analogously, the initial value problem of a Liouville-Weyl differential equation (i.e. systems of form \eqref{eq_nonlinearIVP}) reads
\begin{subequations}\label{eq_CaputoDE_InfLowerBound}
	\begin{align}
		\CderivativeInput{\alpha}{-\infty} u (t) &= Au(t),\quad t\geq 0\\
		u(t) &= u_0(t),\quad t<0
	\end{align}
\end{subequations}
with $u_0 \in X(-\infty,0]$ and where both \eqref{eq_CaputoDE_ZeroLowerBound} and \eqref{eq_CaputoDE_InfLowerBound} share the same matrix $A\in \realnumbers^{n\times n}$.
We investigate forms of solutions of \eqref{eq_CaputoDE_InfLowerBound}, and their relation to solutions of  \eqref{eq_CaputoDE_ZeroLowerBound}. For \eqref{eq_CaputoDE_InfLowerBound} take the ansatz
\begin{align}\label{eq_ansatzExp}
	u(t) = \exp(s t) v_u,\quad t \in \realnumbers
\end{align}
where $v_u\in \realnumbers^n$ and $\mathrm{Re}(s) >0$. This is a natural choice, as its Liouville-Weyl derivative reads $\CderivativeInput{\alpha}{-\infty}u(t) = s^\alpha u(t)$ \cite{samko1993fractional}, yielding the characteristic equation
\begin{align} \label{eq_charEqn_exp}
	\det (A -s^\alpha I) = 0,
\end{align}
where $(\cdot)^\alpha$ denotes the principal value function
\begin{subequations}\label{eq_principalValue}
	\begin{align}
		p_\alpha:\quad &\mathbb C \to \mathcal M,\\
		& w \mapsto p_\alpha(w) = w^\alpha = |w|^\alpha \exp (i \alpha \arg w). \label{eq_principalValue_fcn}
	\end{align}
\end{subequations}
From \eqref{eq_principalValue_fcn}, we see that
\begin{align}
	\mathcal M = \{ m\in \mathbb C : \arg m \in (-\alpha \pi,+\alpha \pi]\}.
\end{align} Since systems \eqref{eq_CaputoDE_ZeroLowerBound} and \eqref{eq_CaputoDE_InfLowerBound} share the same matrix $A$, whenever \eqref{eq_charEqn_MittagLeffler} has a solution $\lambda \in \mathbb C$, also \eqref{eq_charEqn_exp} may have a solution $s$ given by
\begin{align}\label{eq_connectCharEquations}
	\lambda = s^\alpha.
\end{align}
	Relations \eqref{eq_principalValue} through \eqref{eq_connectCharEquations} imply the following result. Therein, the cases of $|\arg \lambda|$ are given in Figure \ref{fig_stability_region}.
	\begin{theorem} \label{thm_chareqn}
		Suppose \eqref{eq_charEqn_MittagLeffler} has a root $\lambda$ with
		\begin{enumerate}[label=(\alph*)]
			\item \label{case_unstable_soln} $|\arg \lambda| < \alpha \pi /2$: 
			Then \eqref{eq_charEqn_exp} has a solution $s$ with $\mathrm{Re}(s)>0$ and thus there exists a solution of \eqref{eq_CaputoDE_InfLowerBound} of the form \eqref{eq_ansatzExp} with $||u(t)||\to \infty,$  $t\to \infty$.
			\item \label{case_invalid_soln} $\alpha \pi /2<|\arg \lambda| \leq \alpha \pi$: 
			Then \eqref{eq_charEqn_exp} has a corresponding solution $s$ with $\mathrm{Re}(s)<0$, but the corresponding ansatz \eqref{eq_ansatzExp} is invalid and thus is no solution of \eqref{eq_CaputoDE_InfLowerBound}.
			\item \label{case_noSoln}$|\arg \lambda| > \alpha \pi$: 
			Then \eqref{eq_charEqn_exp} has no corresponding solution $s:~\lambda = s^\alpha$ and one cannot find a corresponding solution of the form \eqref{eq_ansatzExp} of \eqref{eq_CaputoDE_InfLowerBound}.
		\end{enumerate}
	\end{theorem}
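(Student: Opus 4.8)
The plan is to reduce the matrix statement to the scalar problem of inverting the principal-value map $s\mapsto s^\alpha = p_\alpha(s)$ of \eqref{eq_principalValue}, and then to read off the sign of $\mathrm{Re}(s)$ directly from $\arg\lambda$. Writing $\lambda = |\lambda|\exp(i\phi)$ with $\phi = \arg\lambda\in(-\pi,\pi]$, any $s$ obeying $s^\alpha=\lambda$ under the principal value must satisfy $|s|=|\lambda|^{1/\alpha}$ together with $\alpha\arg s = \phi$, so that $\arg s = \phi/\alpha$, and this determines a genuine principal argument precisely when $\phi/\alpha\in(-\pi,\pi]$, i.e. when $\lambda\in\mathcal M$. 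From here I would compute $\mathrm{Re}(s) = |\lambda|^{1/\alpha}\cos(\phi/\alpha)$, so that the sign of $\mathrm{Re}(s)$ is dictated entirely by whether $|\phi/\alpha|$ lies below or above $\pi/2$, equivalently by whether $|\arg\lambda|$ lies below or above $\alpha\pi/2$.

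With this identity in hand, the three cases follow by inspection. Since $\lambda$ is a root of \eqref{eq_charEqn_MittagLeffler}, it is an eigenvalue of $A$ with some eigenvector $v_u$, so whenever the ansatz \eqref{eq_ansatzExp} is admissible it solves \eqref{eq_CaputoDE_InfLowerBound} because $\CderivativeInput{\alpha}{-\infty}u(t) = s^\alpha u(t) = \lambda u(t) = Au(t)$. In case \ref{case_unstable_soln}, $|\arg\lambda|<\alpha\pi/2$ yields $\cos(\phi/\alpha)>0$, hence $\mathrm{Re}(s)>0$; the ansatz is then admissible and $||u(t)|| = \exp(\mathrm{Re}(s)\,t)\,||v_u||\to\infty$ as $t\to\infty$. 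In case \ref{case_invalid_soln}, $\alpha\pi/2<|\arg\lambda|\leq\alpha\pi$ forces $\pi/2<|\phi/\alpha|\leq\pi$, so $\cos(\phi/\alpha)<0$ and $\mathrm{Re}(s)<0$. In case \ref{case_noSoln}, $|\arg\lambda|>\alpha\pi$ means $\lambda$ lies outside the range $\mathcal M$ of $p_\alpha$, so no $s$ with $s^\alpha=\lambda$ exists and the ansatz cannot even be formed.

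The step demanding the most care is the claim in case \ref{case_invalid_soln} that the candidate $s$ with $\mathrm{Re}(s)<0$, though it formally solves the characteristic equation, produces no genuine solution of \eqref{eq_CaputoDE_InfLowerBound}. The key is that the identity $\CderivativeInput{\alpha}{-\infty}u(t)=s^\alpha u(t)$ underlying the ansatz presupposes convergence of the defining integral $\int_{-\infty}^t (t-\tau)^{-\alpha}\exp(s\tau)\,\dsomething{\tau}$ at its lower limit. When $\mathrm{Re}(s)<0$ the factor $\exp(s\tau)$ grows without bound as $\tau\to-\infty$, so the Liouville-Weyl derivative of $\exp(st)v_u$ fails to exist for every $t$; equivalently $\exp(st)v_u\notin X(-\infty,0]$, exactly as in Example \ref{ex_initialFunctions}\ref{item_exp-t}. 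Hence the restriction $\mathrm{Re}(s)>0$ built into the ansatz \eqref{eq_ansatzExp} is not a mere convenience but a necessary condition for the ansatz to solve the equation, and it is precisely this restriction that separates the genuinely diverging solutions of case \ref{case_unstable_soln} from the spurious ones of case \ref{case_invalid_soln}.
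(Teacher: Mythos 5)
Your proof is correct and follows essentially the same route as the paper, which justifies the theorem simply by appealing to the relations \eqref{eq_principalValue} through \eqref{eq_connectCharEquations}: you invert the principal-value map $p_\alpha$, read off $\arg s = \arg\lambda/\alpha$ and hence the sign of $\mathrm{Re}(s)$ from the threshold $\alpha\pi/2$, and use the range $\mathcal M$ to rule out case \ref{case_noSoln}. Your additional explanation of why the ansatz fails in case \ref{case_invalid_soln} (divergence of the Liouville--Weyl integral for $\mathrm{Re}(s)<0$, as in Example \ref{ex_initialFunctions}\ref{item_exp-t}) matches the paper's own justification and merely makes the implicit argument explicit.
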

    \begin{figure}[ht]
		\centering
		\includegraphics[page=1,width=0.8\textwidth]{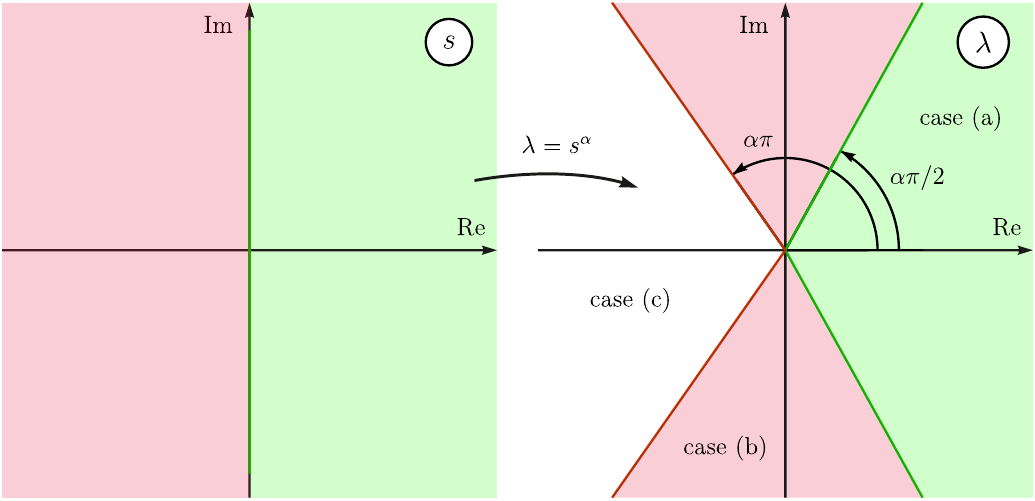}
		\caption{Mapping $p_\alpha (s) = s^\alpha = \lambda$ and cases of Theorem \ref{thm_chareqn}.}
				\label{fig_stability_region}
	\end{figure}
	
	Case \ref{case_unstable_soln} shows, that all diverging solutions of \eqref{eq_CaputoDE_ZeroLowerBound} have a corresponding diverging solution of \eqref{eq_CaputoDE_InfLowerBound} of exponential form. This is sufficient to show instability of the trivial solution in the sense of Definition \ref{def_stability}. In cases \ref{case_invalid_soln} and \ref{case_noSoln}, by definition of the exponential ansatz \eqref{eq_ansatzExp}, one cannot find converging solutions through the characteristic equation \eqref{eq_charEqn_exp}, as these are not in $X(-\infty,0]$. In the remainder of the section, we study the case, where the exponential ansatz fails.
    
		Solutions of \eqref{eq_CaputoDE_InfLowerBound} may be constructed irrespective of the solutions of the characteristic equation \eqref{eq_charEqn_exp}. As discussed in {Proposition \ref{prop_FromAstToC}}, each initial value problem \eqref{eq_CaputoDE_InfLowerBound} of Liouville-Weyl type may be rewritten as an inhomogenous initial value problem of {$\AstderivativeInput{\alpha}{0}$-type}
			\begin{align}\label{eq_IVPrewritten}
				{\AstderivativeInput{\alpha}{0}}u(t) = Au(t) - \mathcal{F}u_0 (t),\quad t\geq 0, \quad u(0) = u_0(0)
			\end{align}
		with the forcing term $\mathcal F u_0$ of Definition \ref{def_Forcing}.
		Applying the variation-of-constants formula \cite{Diethelm2010}, problem \eqref{eq_IVPrewritten} has the solution
		\begin{align}\label{eq_IVPsolution}
			u(t) = E_\alpha (A t^\alpha)u_0(0) -\int_0^t (t-\tau)^{\alpha-1} E_{\alpha,\alpha}(A(t-\tau)^\alpha) \mathcal F u_0(\tau) \dsomething{\tau},
		\end{align}
	where $E_{\alpha,\alpha}$ is the two-parameter Mittag-Leffler function. 
	{The general form of solutions \eqref{eq_IVPsolution} gives insight into solutions of \eqref{eq_CaputoDE_InfLowerBound}, even in cases where the exponential ansatz \eqref{eq_ansatzExp} fails (\ref{case_invalid_soln} and \ref{case_noSoln} above).  We provide an analysis of the scalar case $A<0$ in the following.}
	
	{
		{ \begin{theorem}\label{thm_Forcing}
			Consider the initial value problem \eqref{eq_IVPrewritten} along with its solution \eqref{eq_IVPsolution} for the scalar, real case $A <0$. Assume that $\mathcal F u_0$ of \eqref{eq_ForcingIVP} is continuous on $[0,\infty)$. We have:
			\begin{enumerate}
				\item If $\mathcal F u_0$ is bounded on $[0,\infty)$, then $u$ is bounded on $[0,\infty)$.
				\item If $\lim_{t\to\infty}\mathcal F u_0(t) = 0$, then $\lim_{t\to\infty} u(t) = 0$.
				\item If $\exists \eta \geq 0$ such that $\mathcal F u_0(t)= O(t^{-\eta})$ for $t\to\infty$ then $u(t)= O(t^{-\mu})$ for $t\to \infty$ where $\mu = \min \{\alpha,\eta\}.$
			\end{enumerate}
		\end{theorem}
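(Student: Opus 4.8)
The plan is to read the variation-of-constants formula \eqref{eq_IVPsolution} as a decaying homogeneous term plus a convolution, and to reduce all three claims to one structural fact: the resolvent kernel is integrable. Writing $g:=\mathcal F u_0$ and introducing the kernel
\begin{align}
  K(s) := s^{\alpha-1}E_{\alpha,\alpha}(As^\alpha),\qquad s>0,
\end{align}
the solution takes the form $u(t) = E_\alpha(At^\alpha)\,u_0(0) - (K\ast g)(t)$ with $(K\ast g)(t)=\int_0^t K(t-\tau)g(\tau)\dsomething{\tau}$. First I would collect two asymptotic ingredients specific to $A<0$. From the termwise differentiation identity $\ddsomething{s}E_\alpha(As^\alpha)=A\,K(s)$ together with $E_\alpha(As^\alpha)\to 0$ as $s\to\infty$ (valid since $\arg A=\pi>\alpha\pi/2$), integrating yields $\int_0^\infty K(s)\dsomething{s}=1/|A|$; since $E_{\alpha,\alpha}(As^\alpha)>0$ for $A<0$, the kernel is nonnegative, so $K\in L^1[0,\infty)$ with $\lVert K\rVert_{L^1}=1/|A|$. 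The second ingredient is the Mittag-Leffler asymptotics: $E_\alpha(At^\alpha)=O(t^{-\alpha})$ and, because the leading $z^{-1}$ term in the expansion of $E_{\alpha,\alpha}(z)$ carries the vanishing coefficient $1/\Gamma(0)$, one obtains the sharper tail $K(s)=O(s^{-\alpha-1})$ as $s\to\infty$, while $K(s)=O(s^{\alpha-1})$ as $s\to0^+$ (so $K$ is also integrable at the origin, since $\alpha-1>-1$).

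Claim~1 is then immediate: using $0\le E_\alpha(At^\alpha)\le 1$ and nonnegativity of $K$,
\begin{align}
  |u(t)|\le |u_0(0)| + \lVert g\rVert_\infty\int_0^t K(t-\tau)\dsomething{\tau}\le |u_0(0)| + \lVert g\rVert_\infty\,\lVert K\rVert_{L^1}.
\end{align}
For Claim~2 I would use the standard ``integrable kernel against a vanishing input'' argument: the homogeneous part tends to $0$, and given $\varepsilon>0$ I split the convolution at a fixed $T$ beyond which $|g|<\varepsilon$; the tail is bounded by $\varepsilon\lVert K\rVert_{L^1}$, while the head $\lVert g\rVert_\infty\int_{t-T}^{t}K(s)\dsomething{s}$ vanishes as $t\to\infty$ because $K\in L^1$.

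The heart of the argument is Claim~3 and its sharp rate $\mu=\min\{\alpha,\eta\}$. Absorbing continuity and boundedness of $g$ into a global bound $|g(\tau)|\le C(1+\tau)^{-\eta}$, I would split the convolution at $\tau=t/2$. On $[t/2,t]$ the input obeys $|g(\tau)|\le C'\,t^{-\eta}$, so this piece is at most $C'\,t^{-\eta}\lVert K\rVert_{L^1}=O(t^{-\eta})$. On $[0,t/2]$ one has $t-\tau\ge t/2$, hence $K(t-\tau)=O(t^{-\alpha-1})$, and the piece is $O\!\big(t^{-\alpha-1}\int_0^{t/2}|g|\big)$, which a short case distinction on $\eta\lessgtr 1$ shows equals $O(t^{-\alpha-\eta})$, $O(t^{-\alpha-1}\log t)$, or $O(t^{-\alpha-1})$ respectively — in every case faster than $t^{-\mu}$. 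Combined with the homogeneous term $E_\alpha(At^\alpha)u_0(0)=O(t^{-\alpha})$, the slowest contributions are exactly $t^{-\alpha}$ and $t^{-\eta}$, whose maximum is $t^{-\mu}$, giving $u(t)=O(t^{-\mu})$.

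I expect the main obstacle to be precisely this last claim: extracting the tail rate $K(s)=O(s^{-\alpha-1})$ from the two-parameter Mittag-Leffler asymptotics (rather than the weaker $K\in L^1$ needed for Claims~1 and~2), and then verifying through the $\eta\lessgtr 1$ case analysis that the near-origin portion of the convolution never dominates the target rate $t^{-\mu}$. The computation $\lVert K\rVert_{L^1}=1/|A|$ via the derivative identity is the clean unifying step that makes the first two claims routine.
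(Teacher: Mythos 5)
Your proof is correct, but it follows a genuinely different route from the paper's. The paper does not analyze the variation-of-constants formula directly at all: it artificially doubles the scalar equation into a decoupled planar system with $\tilde A=\operatorname{diag}(A,A)$, observes that the characteristic function $Q(s)=(s^\alpha-A)^2$ has no zeros anywhere in $\mathbb C$ for $A<0$ (since $\arg(s^\alpha)\in(-\alpha\pi,\alpha\pi]$ can never equal $\pi$), and then invokes Theorem~4 of the cited work of Diethelm, Thai and Tuan on non-commensurate planar systems, which delivers all three claims in one stroke. You instead prove everything from scratch by establishing that the resolvent kernel $K(s)=s^{\alpha-1}E_{\alpha,\alpha}(As^\alpha)$ is nonnegative with $\lVert K\rVert_{L^1}=1/|A|$ (via the identity $\ddsomething{s}E_\alpha(As^\alpha)=AK(s)$) and has the tail $K(s)=O(s^{-\alpha-1})$ because the $z^{-1}$ coefficient $1/\Gamma(0)$ in the expansion of $E_{\alpha,\alpha}$ vanishes; the three claims then follow from standard convolution estimates, with the split at $\tau=t/2$ and the $\eta\lessgtr 1$ case analysis correctly isolating $t^{-\alpha}$ (homogeneous term) and $t^{-\eta}$ (near-diagonal part) as the slowest contributions, whose maximum is $t^{-\mu}$. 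What the paper's argument buys is brevity and the fact that it extends verbatim to settings covered by the cited planar result; what yours buys is self-containedness, an explicit quantitative mechanism (the $L^1$ bound $1/|A|$ reappears essentially in the paper's own stability proposition immediately afterwards, proved there by a Beta-function computation), and a transparent explanation of where the rate $\min\{\alpha,\eta\}$ actually originates. Your steps all check out: the termwise differentiation identity, the positivity of $E_{\alpha,\alpha}(-\cdot)$ by complete monotonicity (which the paper itself uses later), and the asymptotics are all standard and correctly applied.
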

	\begin{proof}
		Given the problem \eqref{eq_IVPrewritten}, we construct the two-dimensional initial value problem
		\begin{align}
			\begin{pmatrix}
				\CderivativeInput{\alpha}{0}u_1(t)\\
				\CderivativeInput{\alpha}{0}u_2(t)
			\end{pmatrix}= \tilde A \begin{pmatrix}
			u_1(t)\\
			u_2(t)
			\end{pmatrix}- \begin{pmatrix}
			\mathcal{F}u_0(t)\\
			\mathcal{F}u_0(t)
			\end{pmatrix},\quad t\geq 0,\quad \begin{pmatrix}
				u_1(0)\\
				u_2(0)
			\end{pmatrix} =  \begin{pmatrix}
			u_0(0)\\
			u_0(0)
			\end{pmatrix}
		\end{align}
		where $\tilde A = \operatorname{diag}(A,A)$. Clearly, the two components of this system are decoupled, and each of them is equivalent to the original initial value problem, so this formulation might appear unnecessarily complicated. It has the advantage, however, that we can now directly use the methods and results of \cite{Diethelm22}, which hold for planar systems.
		
		Specifically, we write down the characteristic function of the system \eqref{eq_IVPrewritten}, it being
		\begin{align}
			Q(s) = \det \left(\begin{pmatrix}
				s^\alpha & 0\\
				0 & s^\alpha
			\end{pmatrix}-\tilde A \right) = (s^\alpha - A)^2.
		\end{align}
		It follows that $Q(s) = 0$ for some $s \in \mathbb C$ if and only if $s^\alpha = A$. But now, for any $s\in\mathbb C$ we have that $\arg(s) \in (-\pi, \pi]$, and hence $\arg(s^\alpha) = \alpha \cdot \arg(s) \in (-\alpha\pi, \alpha\pi]$. 
		Since in our case $\alpha \in (0, 1)$, we thus conclude that $|\arg(s^\alpha)|\neq \pi$, and so the equation $s^\alpha = A$ cannot have a solution. Thus, the characteristic function does not possess any zeros in $\mathbb C$, and so in particular it does not have any zeros in the closed right half of the complex plane. This observation allows us to invoke \cite[Theorem 4]{Diethelm22} which immediately yields the desired result.
		\end{proof}}
		The above Theorem gives insight on solutions of \eqref{eq_CaputoDE_InfLowerBound} in the case where the exponential ansatz \eqref{eq_ansatzExp} fails, since \eqref{eq_charEqn_exp} has no solution. In fact, using Lemma \ref{lemma_forcing}, we can make the following statement.
		\begin{corollary}\label{corollary_scalarLTI_FODE}
			Consider the initial value problem \eqref{eq_CaputoDE_InfLowerBound} in the scalar, real case $A<0$. Its solution $u$ satisfies all properties listed in Theorem \ref{thm_Forcing}. In particular, it holds that  $u(t)= O(t^{-\alpha})$ for $t\to \infty$.
		\end{corollary}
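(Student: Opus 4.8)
The plan is to recognize this Corollary not as an independent computation but as a direct synthesis of the two preceding results, Lemma \ref{lemma_forcing} and Theorem \ref{thm_Forcing}. The starting point is Lemma \ref{lemma_zeroLowerBound}, which shows that the scalar Liouville--Weyl problem \eqref{eq_CaputoDE_InfLowerBound} with $A<0$ is equivalent to the inhomogeneous Caputo problem \eqref{eq_IVPrewritten}, namely $\CderivativeInput{\alpha}{0}u(t) = Au(t) - \mathcal F u_0(t)$ with $u(0)=u_0(0)$. This is precisely the system to which Theorem \ref{thm_Forcing} applies in the scalar real case. The only standing hypothesis of that Theorem --- that $\mathcal F u_0$ be continuous on $[0,\infty)$ --- is assumed there but must be supplied here.

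Second, I would invoke Lemma \ref{lemma_forcing} with $t_0=0$. Since $u_0\in X(-\infty,0]$ by hypothesis, that Lemma simultaneously furnishes the missing continuity of $\mathcal F u_0$ on $[0,\infty)$ and the explicit fading-memory bound $|\mathcal F u_0(t)|\le C(t+\eta)^{-\alpha}$ for $t\ge 0$, with $C>0$ and $\eta>0$ as given there. Once continuity is established, Theorem \ref{thm_Forcing} becomes unconditionally applicable, and the three claimed properties of $u$ follow by checking the hypothesis of each of its three parts against this single bound.

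Third, I would read off the consequences one at a time. The right-hand side $C(t+\eta)^{-\alpha}$ is itself bounded on $[0,\infty)$ (by $C\eta^{-\alpha}$), so part~1 of Theorem \ref{thm_Forcing} yields boundedness of $u$; it tends to zero as $t\to\infty$, so part~2 yields $u(t)\to 0$; and it is exactly of order $O(t^{-\alpha})$, so part~3, applied with decay exponent $\alpha$, gives $u(t)=O(t^{-\mu})$ with $\mu=\min\{\alpha,\alpha\}=\alpha$, which is the sharp rate stated in the Corollary.

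The only point requiring care --- and the nearest thing to an obstacle --- is purely notational: the symbol $\eta$ plays two unrelated roles, as the neighborhood width in the definition of $X(-\infty,0]$ (and in Lemma \ref{lemma_forcing}) and as the decay exponent in part~3 of Theorem \ref{thm_Forcing}. One must apply part~3 with \emph{decay exponent} $\alpha$, which is exactly the rate delivered by Lemma \ref{lemma_forcing}, and not accidentally substitute the neighborhood width. No genuine analytic difficulty remains, since all the work has been front-loaded into the fading-memory estimate of Lemma \ref{lemma_forcing} and the transfer principle of Theorem \ref{thm_Forcing}; the $t^{-\alpha}$ rate is the native algebraic decay of the fractional system and is in general not improvable, precisely because the forcing cannot be guaranteed to decay faster than $t^{-\alpha}$.
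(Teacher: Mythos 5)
Your proof is correct and follows essentially the same route as the paper, which likewise disposes of the corollary in one line by noting that Lemma \ref{lemma_forcing} supplies both the continuity and the decay bound $C(t+\eta)^{-\alpha}$ required by Theorem \ref{thm_Forcing}. Your additional care in separating the two roles of the symbol $\eta$ and in reading off $\mu=\min\{\alpha,\alpha\}=\alpha$ is a useful elaboration but not a different argument.
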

		\begin{proof}
			By Lemma \ref{lemma_forcing}, the forcing term of the initial condition $u_0 \in X(-\infty,0]$ has all properties required by Theorem \ref{thm_Forcing}.
		\end{proof}
		Additionally to boundedness and decay, we next show stability of the present case. Therein, we use the following result.
	\begin{lemma}\label{lemma_forcingBoundC0}
		Assuming $||u_0||_\infty = \max_{t\leq 0}|u_0(t)|<\infty$ it holds for all $t\geq 0$ that
		\begin{align}
			|\mathcal{F}u_0(t)| \leq \frac{2||u_0||_\infty }{\Gamma(1-\alpha)}t^{-\alpha} .\label{eq_forcingBoundInftyNormC0}
		\end{align}
	\end{lemma}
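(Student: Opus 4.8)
The plan is to integrate by parts so as to transfer the derivative off $u_0'$ and onto the kernel $(t-\tau)^{-\alpha}$; the payoff is that the resulting estimate depends only on $||u_0||_\infty$, never on $u_0'$, which is exactly the sharpening of Lemma~\ref{lemma_forcing} we are after. With $t_0=0$, the forcing term reads
\begin{align}
    \mathcal{F}u_0(t) = \frac{1}{\Gamma(1-\alpha)}\int_{-\infty}^{0}(t-\tau)^{-\alpha}u_0'(\tau)\dsomething{\tau},\quad t>0.
\end{align}
Since $\frac{\mathrm d}{\mathrm d\tau}(t-\tau)^{-\alpha}=\alpha(t-\tau)^{-\alpha-1}$, integration by parts (with $u_0$ as the antiderivative factor) produces a boundary term at $\tau=0$ equal to $t^{-\alpha}u_0(0)$, a boundary term at $\tau\to-\infty$, and the integral $-\alpha\int_{-\infty}^0(t-\tau)^{-\alpha-1}u_0(\tau)\dsomething{\tau}$.

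First I would argue that the boundary term at $-\infty$ vanishes: for fixed $t>0$ the kernel satisfies $(t-\tau)^{-\alpha}\to 0$ as $\tau\to-\infty$, while $u_0$ is bounded by hypothesis, so the product tends to zero. This leaves
\begin{align}
    \mathcal{F}u_0(t) = \frac{1}{\Gamma(1-\alpha)}\left(t^{-\alpha}u_0(0)-\alpha\int_{-\infty}^0(t-\tau)^{-\alpha-1}u_0(\tau)\dsomething{\tau}\right).
\end{align}
Taking absolute values and bounding both $|u_0(0)|$ and $|u_0(\tau)|$ by $||u_0||_\infty$ gives
\begin{align}
    |\mathcal{F}u_0(t)|\le\frac{||u_0||_\infty}{\Gamma(1-\alpha)}\left(t^{-\alpha}+\alpha\int_{-\infty}^0(t-\tau)^{-\alpha-1}\dsomething{\tau}\right).
\end{align}
The remaining integral is elementary: the substitution $s=t-\tau$ turns it into $\alpha\int_t^\infty s^{-\alpha-1}\dsomething{s}=t^{-\alpha}$. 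The two contributions are therefore equal, which is precisely where the factor $2$ in the claimed bound $|\mathcal{F}u_0(t)|\le 2||u_0||_\infty t^{-\alpha}/\Gamma(1-\alpha)$ comes from.

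The hard part will be justifying the integration by parts on the unbounded interval $(-\infty,0]$ under the weak regularity at hand, since $u_0'$ is only assumed to exist almost everywhere. I would invoke local absolute continuity of $u_0$ (as available for members of $X(-\infty,0]$) to legitimize the integration by parts on each finite interval $[-R,0]$, then pass to the limit $R\to\infty$, using the boundedness of $u_0$ together with the integrability of $(t-\tau)^{-\alpha-1}$ near $-\infty$ to control the tail and to confirm the vanishing of the boundary term. Once the integration by parts is on firm footing, everything else is a direct estimate and the bound follows immediately.
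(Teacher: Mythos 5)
Your proof is correct and takes essentially the same route as the paper: the paper's own proof simply states that one repeats the argument of Lemma~\ref{lemma_forcing} ``without initially splitting the integral,'' which is exactly your integration by parts over all of $(-\infty,0]$ with the boundary term and the remaining integral each contributing $\|u_0\|_\infty t^{-\alpha}$. (Your sign on the $\alpha\int(t-\tau)^{-\alpha-1}u_0(\tau)\,\mathrm{d}\tau$ term is the correct one; the discrepancy with the paper's displayed formula is immaterial since absolute values are taken.)
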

	\begin{proof}
		The proof works the same as for Lemma \ref{lemma_forcing}, but without initially splitting the integral. 
	\end{proof}
	\begin{proposition}
	Consider the initial value problem \eqref{eq_IVPrewritten} along with its solution \eqref{eq_IVPsolution} for the scalar, real case $A <0$. The trivial equilibrium $u^\ast = 0$ is stable.
	\end{proposition}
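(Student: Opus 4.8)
The plan is to establish a uniform-in-time bound of the form $|u(t)| \le c\,||u_0||_\infty$ for all $t \ge 0$, with a constant $c$ that is independent of both $u_0$ and $t$. Once such a bound is available, stability in the sense of Definition \ref{def_stability} follows immediately: given $\varepsilon > 0$, the choice $\delta = \varepsilon/c$ guarantees that $||u_0||_\infty < \delta$ implies $|u(t)| < \varepsilon$ for all $t\ge 0$. The starting point is the explicit variation-of-constants representation \eqref{eq_IVPsolution}, and I would estimate its two contributions separately: the homogeneous term $E_\alpha(At^\alpha)u_0(0)$ and the convolution integral carrying the forcing $\mathcal F u_0$.

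For the homogeneous term I would use that, for $A<0$ and $\alpha\in(0,1)$, the function $x\mapsto E_\alpha(-x)$ is completely monotone on $[0,\infty)$, so that $0\le E_\alpha(At^\alpha)\le E_\alpha(0)=1$ and hence $|E_\alpha(At^\alpha)u_0(0)|\le |u_0(0)| \le ||u_0||_\infty$. For the convolution term I would combine three ingredients: the kernel bound $0\le E_{\alpha,\alpha}(A(t-\tau)^\alpha)\le E_{\alpha,\alpha}(0)=1/\Gamma(\alpha)$, which again follows from complete monotonicity of $E_{\alpha,\alpha}(-x)$ (valid for $\beta\ge\alpha$, here with equality); the forcing estimate from Lemma \ref{lemma_forcingBoundC0}, namely $|\mathcal F u_0(\tau)|\le \frac{2||u_0||_\infty}{\Gamma(1-\alpha)}\tau^{-\alpha}$; and the Beta-function identity
\begin{align}
	\int_0^t (t-\tau)^{\alpha-1}\tau^{-\alpha}\,\dsomething{\tau} = B(\alpha,1-\alpha) = \Gamma(\alpha)\Gamma(1-\alpha),
\end{align}
which, after the substitution $\tau=ts$, is seen to be \emph{independent of $t$}.

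Multiplying these three bounds, the convolution term is controlled by $\frac{1}{\Gamma(\alpha)}\cdot\frac{2||u_0||_\infty}{\Gamma(1-\alpha)}\cdot\Gamma(\alpha)\Gamma(1-\alpha)=2||u_0||_\infty$, so that $|u(t)|\le 3||u_0||_\infty$ uniformly in $t\ge 0$, giving $c=3$ and $\delta=\varepsilon/3$. I expect the main obstacle to be the verification that the two endpoint singularities of the convolution do not destroy the uniform bound: the kernel $(t-\tau)^{\alpha-1}$ is singular at $\tau=t$ while the forcing bound $\tau^{-\alpha}$ is singular at $\tau=0$, and it is precisely the fact that both singularities are integrable and that their product integrates to the $t$-independent value $\Gamma(\alpha)\Gamma(1-\alpha)$ that turns a potentially $t$-growing estimate into a bound uniform in time. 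A secondary point requiring care is the citation of the complete-monotonicity bounds for $E_\alpha(-x)$ and $E_{\alpha,\alpha}(-x)$; alternatively, one may replace these by the standard global bounds $E_\alpha(At^\alpha),\,E_{\alpha,\alpha}(A(t-\tau)^\alpha)\le C$ for $A<0$, which suffice for the argument and leave the structure of the proof unchanged.
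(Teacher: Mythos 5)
Your proposal is correct and follows essentially the same route as the paper's own proof: the same decomposition of \eqref{eq_IVPsolution} into homogeneous and convolution parts, the same complete-monotonicity bounds $E_\alpha(0)=1$ and $E_{\alpha,\alpha}(0)=1/\Gamma(\alpha)$, the bound of Lemma \ref{lemma_forcingBoundC0}, and the $t$-independent Beta integral $\Gamma(\alpha)\Gamma(1-\alpha)$, yielding the identical uniform estimate $|u(t)|\le 3||u_0||_\infty$. The only cosmetic difference is the choice $\delta=\varepsilon/3$ versus the paper's $\delta=\varepsilon/4$; both suffice since the hypothesis $||u_0||_\infty<\delta$ is strict.
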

\begin{proof}
	It holds that $|u(t)|\leq 3 ||u_0||_\infty$ for all $t\geq 0$, which we show in the following.
	As the behavior of the Mittag-Leffler function is well-studied, it remains to bound the influence of the forcing term
	\begin{align}
		\tilde {\mathcal F }u_0 (t) = -\int_0^t (t-\tau)^{\alpha-1} E_{\alpha,\alpha}(A(t-\tau)^\alpha) \mathcal F u_0(\tau) \dsomething{\tau}.
	\end{align}
	Applying the result of Lemma \ref{lemma_forcingBoundC0}, it holds that
	\begin{align}
		\left| \tilde {\mathcal F }u_0 (t)\right|&\leq \frac{2 ||u_0||_\infty}{\Gamma(1-\alpha)}\int_0^t (t-\tau)^{\alpha-1} |E_{\alpha,\alpha}(A(t-\tau)^\alpha)| \tau^{-\alpha} \dsomething{\tau}\\
		&\leq \frac{2 ||u_0||_\infty}{\Gamma(\alpha)\Gamma(1-\alpha)}\int_0^t (t-\tau)^{\alpha-1}  \tau^{-\alpha} \dsomething{\tau} \label{eq_boundForcingpart1}
	\end{align}
	Here we use that, since $E_{\alpha,\alpha}(A(\cdot)^\alpha)$ is monotonically decreasing whenever $A<0$ 
	(this immediately follows from the complete monotonicity of the function $E_{\alpha,\alpha}(-(\cdot))$
	shown in \cite[\S~4.9.2]{GorenfloEtAl2020}), it holds that
	\begin{align}
		\max_{\tau \in [0,t]}|E_{\alpha,\alpha}(A(t-\tau)^\alpha)| = E_{\alpha,\alpha}(0) = \frac{1}{\Gamma(\alpha)}.
	\end{align}
	Introducing the substitution $s = \tau /t$, the integral may be rewritten using basic properties of the Gamma function
	\begin{align}
	\int_0^t (t-\tau)^{\alpha-1}  \tau^{-\alpha} ~\dsomething{\tau} &= \int_0^1 (t-s t)^{\alpha-1}  (st)^{-\alpha} t~\dsomething{s}\\
		&= \int_0^1 (1-s)^{\alpha-1}  s^{-\alpha} ~\dsomething{s} =\Gamma(\alpha)\Gamma(1-\alpha).
	\end{align}
	Thus, the inequality \eqref{eq_boundForcingpart1} becomes
	\begin{align}
		\left| \tilde {\mathcal F }u_0 (t)\right|\leq 2||u_0||_\infty.
	\end{align} Finally, the solution may be upper bounded
	\begin{align} \label{eq_boundStability}
		|u(t)|\leq \max_{t\geq 0}E_\alpha(At^\alpha) |u_0(0)| + 2||u_0||_\infty \leq 3||u_0||_\infty
	\end{align}
	as the maximum is given by $E_\alpha(0) = 1$.
	
	We now use the bound \eqref{eq_boundStability} to show stability per Definition \ref{def_stability}. For any $\varepsilon>0$, take $\delta = \varepsilon/4$. With $|u(t)|\leq 3 ||u_0||_\infty$ we arrive at
	\begin{align}
		|u(t)|\leq 3 \delta =3/4 \varepsilon<\varepsilon.
	\end{align}
\end{proof}

	\section{Fractional-Order Linear Time-Periodic Systems}\label{sec_LTP}
	{
		We now assume that \eqref{eq_nonlinearIVP} either is time-autonomous $f(t,x) = f(x)$ or time-periodic, i.e. there exists a period time $\hat T>0$ s.t. $f(t,x) = f(t+\hat T,x)$ holds for all $x$ and $t$. Further, assume that \eqref{eq_nonlinearIVP} possesses a periodic solution $x_\mathrm{p}:\realnumbers\to \realnumbers^n$ where there exists a period time $T=m\hat T>0$ for some positive integer $m$ such that $x_\mathrm{p}(t+T)=x_\mathrm p(t)$ holds for all $t$. The LTP-FODE emerges when linearizing about this periodic solution $x_\mathrm p $ of \eqref{eq_nonlinearIVP}
		\begin{subequations} \label{eq_FracHillIVP}
			\begin{align}\label{eq_FracHillDiffEqn}
				\CderivativeInput{\alpha}{-\infty} y(t) &= J(t) y(t), \quad t \geq t_0\\
				y(t) &= y_0(t),\quad \quad \,\,t< t_0
			\end{align}
			where $x(t)-x_\mathrm p(t) \eqqcolon y(t)\in \realnumbers^n$ and $J(t)=J(t+T) = \frac{\partial f}{\partial x}(t,{x_\mathrm p(t)})\in \realnumbers^{n\times n}$ is $T$-periodic
		\end{subequations}
		and $y_0 \in {X(-\infty,t_0]}$. The coordinate  $y(t)$ is introduced to describe the local behavior of a small perturbation from the periodic solution $x_\mathrm p$. Thus, asymptotic stability and instability of the periodic solution is given whenever the equilibrium $y\equiv 0$ of \eqref{eq_FracHillIVP} is asymptotically stable and unstable, respectively.
		
		The choice of the lower bound $-\infty$ in the differential equation \eqref{eq_nonlinearFODE} is necessary when we want to study periodic solutions $x_\mathrm p(t) = x_\mathrm p(t+T)$ for all $t$ and some $T>0$ of \eqref{eq_nonlinearIVP}. A version of \eqref{eq_nonlinearIVP} with zero lower bound cannot have a periodic solution \cite{Tavazoei2009}. 
		
	} 
	We intend to study the behavior of the solutions of \eqref{eq_FracHillIVP}. To the best of our knowledge, there are no results on the form of solutions for this problem, i.e. no Floquet theory. Here, we will attempt to fill that gap.
	
	{
		\begin{theorem}\label{thm_fracHill}
			{ Let $J:[0,T]\to \realnumbers^{n\times n}$ and $p:[0,T]\to \realnumbers^n$ be $T$-periodic with the Fourier series expansions
				\begin{align}\label{eq_FracHillFourierForm}
					J(t) = \sum_{k=-\infty}^{+\infty}J_k\exp(i\omega k t),\quad p(t) = \sum_{k=-\infty}^{+\infty}p_k\exp(i\omega k t)
				\end{align}
				where $\omega = 2\pi / T$ and $J_k$, $p_k$ are complex quantities of appropriate dimension. Assume the following properties of $J$ and $p$.
				\begin{itemize}
					\item[\assJ] The Fourier series expansion of $J$ is convergent almost everywhere on $[0,T]$. 
					\item[\assp] The Fourier series expansion of $p$ is absolutely convergent, and $p \in C^1[0,T]$ with its derivative integrable.
				\end{itemize} A solution of \eqref{eq_FracHillIVP} that takes the form
				\begin{align}\label{eq_solutionFloquetForm}
					y(t) = \exp(\lambda t)p(t)
				\end{align}
				for all $t\in \realnumbers$ and for $\operatorname{Re}\lambda\geq 0$} fulfills the nonlinear eigenvalue problem
			\begin{align}\label{eq_fracHilleqn}
				0 = \hMat_\infty^{(\alpha)} (\lambda)\pVec_\infty,
			\end{align}
			where
			\begin{align}\label{eq_fracHillmat}
				\hMat_\infty^{(\alpha)}(\lambda) &= \begin{pmatrix}
					\ddots & \vdots & \vdots & \vdots & \\
					\dots & J_0 - (\lambda -i\omega)^\alpha I & J_{-1} & J_{-2} & \dots\\
					\dots & J_1 & J_0 -\lambda^\alpha I & J_{-1} & \dots\\
					\dots & J_{2} & J_{1} & J_0 - (\lambda +i\omega)^\alpha I & \dots\\
					& \vdots & \vdots & \vdots & \ddots
				\end{pmatrix},
				\\
				\pVec_\infty^\top &= \begin{pmatrix}
					\dots &	p_{-1}^\top & p_0^\top &	p_{+1}^\top &\dots
				\end{pmatrix}.
			\end{align}
			Herein, $(\cdot)^\alpha$ denotes the principal value and $I$ is the identity matrix of appropriate dimension.
			We call $\hMat_\infty^{(\alpha)}(\lambda)$ the fractional Hill matrix. 
		\end{theorem}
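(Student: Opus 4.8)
The plan is to substitute the Floquet--Fourier ansatz into \eqref{eq_FracHillDiffEqn}, let the Liouville--Weyl operator act mode by mode, and then match Fourier coefficients. First I would insert the expansion \eqref{eq_FracHillFourierForm} of $p$ into \eqref{eq_solutionFloquetForm} and write the candidate solution as a single exponential series,
\begin{align}
	y(t) = \exp(\lambda t)\sum_{k=-\infty}^{+\infty}p_k\exp(i\omega k t) = \sum_{k=-\infty}^{+\infty}p_k\exp\bigl((\lambda + i\omega k)t\bigr).
\end{align}
Each summand is a pure exponential $\exp(s_k t)$ with $s_k = \lambda + i\omega k$ and $\operatorname{Re}(s_k)=\operatorname{Re}\lambda\geq 0$, so the eigenfunction identity $\CderivativeInput{\alpha}{-\infty}\exp(s_k t)=s_k^\alpha\exp(s_k t)$ recalled in Section \ref{sec_LTI} applies to it. Acting termwise with $\CderivativeInput{\alpha}{-\infty}$ should then give
\begin{align}
	\CderivativeInput{\alpha}{-\infty}y(t) = \sum_{k=-\infty}^{+\infty}(\lambda + i\omega k)^\alpha p_k\exp\bigl((\lambda + i\omega k)t\bigr),
\end{align}
with $(\cdot)^\alpha$ the principal value \eqref{eq_principalValue}.

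Next I would expand the right-hand side using the Fourier series of $J$ and collect terms of equal frequency by a Cauchy product: writing $J(t)y(t)=\sum_m\sum_k J_m p_k\exp((\lambda+i\omega(m+k))t)$ and substituting $n=m+k$ yields
\begin{align}
	J(t)y(t) = \sum_{n=-\infty}^{+\infty}\Bigl(\sum_{k=-\infty}^{+\infty}J_{n-k}p_k\Bigr)\exp\bigl((\lambda + i\omega n)t\bigr).
\end{align}
Both sides of \eqref{eq_FracHillDiffEqn} now carry the common factor $\exp(\lambda t)$; dividing it out leaves an identity between two $T$-periodic functions whose Fourier coefficients must agree, so that for every $n\in\mathbb Z$,
\begin{align}
	(\lambda + i\omega n)^\alpha p_n = \sum_{k=-\infty}^{+\infty}J_{n-k}p_k,\qquad\text{i.e.}\qquad \sum_{k=-\infty}^{+\infty}J_{n-k}p_k-(\lambda + i\omega n)^\alpha p_n = 0.
\end{align}
These block equations, read off for all $n$, are exactly the rows of $\hMat_\infty^{(\alpha)}(\lambda)\pVec_\infty=0$: the block in row $n$, column $k$ is $J_{n-k}$ off the diagonal and $J_0-(\lambda+i\omega n)^\alpha I$ on it, which is \eqref{eq_fracHilleqn}.

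The genuinely delicate points, rather than the bookkeeping above, are the two interchanges of limiting operations. Applying the nonlocal operator $\CderivativeInput{\alpha}{-\infty}$ termwise to the infinite series for $y$ is not automatic, since it is an improper integral over $(-\infty,t]$ acting on $y'$; assumption \assp{} (absolute convergence of the Fourier series of $p$ together with $p\in C^1[0,T]$ and integrable derivative) is precisely what should guarantee that $y$ is differentiable, that the differentiated series converges suitably, and that sum and integral may be exchanged, while assumption \assJ{} controls the Cauchy product that identifies $\sum_k J_{n-k}p_k$ as the $n$-th Fourier coefficient of $J(t)y(t)$. I expect the \textbf{main obstacle} to be this first interchange in the boundary case $\operatorname{Re}\lambda=0$, where every $s_k$ is purely imaginary, the eigenfunction integral converges only conditionally, and one must argue --- e.g.\ by a limiting argument from $\operatorname{Re}\lambda>0$ or by directly evaluating $\int_0^\infty u^{-\alpha}\exp(-s_k u)\,\dsomething{u}$ --- that $\CderivativeInput{\alpha}{-\infty}\exp(s_k t)=s_k^\alpha\exp(s_k t)$ persists and the termwise differentiation remains valid.
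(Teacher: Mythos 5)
Your proposal follows essentially the same route as the paper's proof: substitute the Floquet--Fourier ansatz, apply the eigenfunction identity $\CderivativeInput{\alpha}{-\infty}\exp(s_k t)=s_k^\alpha\exp(s_k t)$ termwise (justified by \assp), form the Cauchy product of the two Fourier series for $J(t)y(t)$ (justified by \assJ~and \assp), and match coefficients of the exponentials $\exp((\lambda+ik\omega)t)$ to read off the block rows of $\hMat_\infty^{(\alpha)}(\lambda)\pVec_\infty=0$. Your added discussion of the boundary case $\operatorname{Re}\lambda=0$ is a reasonable refinement, but it does not change the argument, which matches the paper's.
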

}
	\begin{proof}
		Recall, that the Liouville-Weyl derivative of the exponential function reads 
		\begin{align}
			\CderivativeInput{\alpha}{-\infty}\{\exp((\lambda + i\omega)t)\} = (\lambda + i\omega)^\alpha \exp((\lambda + i \omega )t),\quad \text{for all }t
		\end{align}
		whenever $\operatorname{Re}\lambda \geq 0$ and where $(\cdot)^\alpha$ denotes the principal value, $\alpha \in (0,1)$ \cite{samko1993fractional}. One may compute the fractional derivative of the Floquet form \eqref{eq_solutionFloquetForm} by interchanging the order of summation and fractional differentiation, justified by Assumption~\assp,
		\begin{align}
			\CderivativeInput{\alpha}{-\infty}y(t) &= \CderivativeInput{\alpha}{-\infty}\left\{\sum_{k=-\infty}^{+\infty}p_k \exp((\lambda + ik\omega)t)\right\}\\
			&= \sum_{k=-\infty}^{+\infty}p_k \CderivativeInput{\alpha}{-\infty}\left\{\exp((\lambda + ik\omega)t)\right\}\\
			&= \sum_{k=-\infty}^{+\infty}p_k (\lambda + ik\omega)^\alpha \exp((\lambda + ik\omega)t). \label{eq_RHSmanip}
		\end{align}
		Next, rewrite the right-hand side of \eqref{eq_FracHillDiffEqn} using \eqref{eq_FracHillFourierForm} and \eqref{eq_solutionFloquetForm}
		\begin{align}
			J(t)y(t) &= \left(\sum_{j=-\infty}^{+\infty}J_j\exp(i\omega j t)\right)\left(\sum_{k=-\infty}^{+\infty}p_k\exp((\lambda + ik\omega ) t)\right)\\
			&= \sum_{k=-\infty}^{+\infty}\sum_{j=-\infty}^{+\infty}J_jp_k\exp((\lambda + i(k+j)\omega ) t)\\
			&= \sum_{k=-\infty}^{+\infty}\left(\sum_{j=-\infty}^{+\infty}J_jp_{k-j}\right)\exp((\lambda + ik\omega ) t)\label{eq_LHSmanip}
		\end{align}
		where in the last line we performed the index shift $k\mapsto k-j$, justified by Assumptions \assJ~and \assp. Substituting \eqref{eq_RHSmanip} and \eqref{eq_LHSmanip} into \eqref{eq_FracHillDiffEqn}, one obtains
		\begin{align}
			0=\sum_{k=-\infty}^{+\infty}\left(p_k(\lambda + ik\omega)^\alpha-\sum_{j=-\infty}^{+\infty}J_jp_{k-j}\right)\exp((\lambda + ik\omega ) t).
		\end{align}
		The sum over index $k$ vanishes for all $t$ iff every summand of the outer sum vanishes. With the above definition of $\pVec_\infty$, the $k$-th term reads
		\begin{align}\label{eq_kthTermProof}
			0 = (\lambda + ik\omega)^\alpha \begin{bmatrix}
				\cdots &0&I&0&\cdots
			\end{bmatrix}\pVec_\infty-\begin{bmatrix}
				\cdots &J_{1}&J_0&J_{-1}&\cdots
			\end{bmatrix}\pVec_\infty
		\end{align}
		where the middle entry of the two wide matrices are paired with the $k$-th entry of $\pVec_\infty$. Rewriting \eqref{eq_kthTermProof} into matrix notation for all $k$, one obtains \eqref{eq_fracHilleqn} and the proof is complete.
	\end{proof}
	The technical Assumptions \assJ~and \assp~on $J$ and $p$, respectively, ensure that the reformulation into the nonlinear eigenvalue problem \eqref{eq_fractionalHillProblem} is valid. Both assumptions are following the approach of \cite{Zhou2002} on first-order systems. For a given function $J$, Assumption \assJ~is easily checked. However, as the function $p$ is a priori unknown and is only approximated via truncated Fourier series in the following, Assumption \assp~remains a technical assumption.
	
	A nonlinear eigenvalue problem such as \eqref{eq_fractionalHillProblem} as a characteristic equation is not uncommon. We have seen in Section \ref{sec_LTI}, that a similar but simpler eigenvalue problem occurs for time-invariant systems. Another example of nonlinear eigenvalue problems as characteristic equations shows up in delay differential equations \cite{Stepan2011} for both the time invariant and time periodic cases. Therein, the nonlinearity in the eigenvalue problem is an exponential of the eigenvalue $\lambda$.
	{
			
		The constraint $\operatorname{Re} \lambda \geq 0$ is neccessary, since otherwise the Liouville-Weyl derivative does not exist. {This shows, that an extension of Floquet form solutions as in \cite{Traversa2020} to the system \eqref{eq_FracHillIVP} is generally not possible.} For a certain class of systems with memory terms, results on forms of solutions are available \cite{Traversa2020}. However, fractionally damped mechanical systems~\cite{Hinze2020}
		\begin{align}
			M(q)\ddot q - h\left(q,\CderivativeInput{\alpha}{-\infty}q,\dot q\right) = 0
		\end{align}
		just barely escape the requirements of \cite{Traversa2020}, since the memory kernel $K(t,\tau) = (t-\tau)^{-\alpha}/\Gamma(1-\alpha)$ does not satisfy $\int_{-\infty}^{t}|K(t,\tau)|\dsomething{\tau}<\infty$ for any $t$.
	\begin{remark}[Groups of Floquet exponents]\label{rem_GroupOfLambda}
		By the structure of the fractional Hill matrix \eqref{eq_fracHillmat}, given a solution~$\lambda$ which makes $\hMat_\infty^\alpha(\lambda)$ rank deficient, $\lambda + ik\omega$ for all $k \in \mathbb{Z}$ is also a solution. Hence, the Floquet exponents come in groups, just as in the integer-order case $\alpha = 1$ \cite{Bayer2023a}. This comes as no surprise, as the property is a consequence of the ansatz \eqref{eq_solutionFloquetForm}.
	\end{remark}}
	\begin{remark}[Floquet exponents with negative real part] \label{rmk_negativeFloquetExponents}
		There are cases, where the fractional Hill problem \eqref{eq_fracHilleqn} has eigenvalues $\lambda$ with negative real part. Consider a constant system matrix $J(t) = J_0 \in \realnumbers^{n\times n}$, which is $T$-periodic for any $T>0$. Choose $J_0$ such that it has a linear eigenvalue $\mu: \det(J_0-\mu I) = 0$ with $\alpha \pi /2<|\arg \mu| \leq \alpha \pi$. Then, as stated in Theorem \ref{thm_chareqn}, there exists a $\lambda:\mu = \lambda^\alpha$ with $\pi /2<|\arg \lambda| \leq \pi$ and thus $\operatorname{Re}\lambda <0$. This $\lambda$ is an eigenvalue of \eqref{eq_fracHilleqn}, since the block-wise center row of \eqref{eq_fracHillmat} reads
		\begin{align}
			&\left(\cdots \quad 0 \quad 0 \quad J_0 -\lambda^\alpha I\quad 0\quad 0\quad \cdots\right)\\
			= &\left(\cdots \quad 0 \quad 0 \quad J_0 -\mu I\quad 0\quad 0\quad \cdots\right)
		\end{align}
		and is thus rank-deficient.
	\end{remark}
{
	\begin{remark}
		{ One could consider the differential equation\begin{subequations}\label{eq_FracHillDiffEqn_ZeroLB_IVP}
				\begin{align}\label{eq_FracHillDiffEqn_ZeroLB}
					\AstderivativeInput{\alpha}{0} \tilde y(t) &= J(t) \tilde y(t), \quad t >0\\
					\tilde y(0) &= \tilde y_0
				\end{align}
			\end{subequations} where $J$ is $T$-periodic. Inspired by the form of solutions of LTI first-order DEs and LTI-FODEs with lower bound zero, taking an Ansatz as \eqref{eq_solutionFloquetForm} but with Mittag-Leffler functions, will greatly complicate setting up a Hill-type problem due to the lack of a multiplicative identity of the Mittag-Leffler function \cite{Diethelm2010}
			\begin{align}
				E_\alpha(a)E_\alpha(b)\neq E_\alpha(a+b).
		\end{align}
	However, it is clear that there exists a connection between the solutions of problems \eqref{eq_FracHillDiffEqn_ZeroLB_IVP} and \eqref{eq_FracHillIVP}. Indeed, when the Liouville-Weyl problem \eqref{eq_FracHillIVP} is reformulated to a Caputo initial value problem as done in \eqref{eq_rewrittenIVPrecastNew}, the only difference is the time-dependent forcing term $\mathcal Fy_0(t)$ and thus \eqref{eq_FracHillDiffEqn_ZeroLB_IVP} and \eqref{eq_FracHillIVP} share a homogenous solution.}
	\end{remark}}
	
{
		\begin{remark}[Truncated fractional Hill problem]
		For a numerical approximation, one can consider the truncated fractional Hill matrix of truncation order $N$
		\begin{align}\label{eq_FractionalHillMatTruncated}
			\hMat_{N}^{(\alpha)}(\lambda) &= \begin{pmatrix}
				J_0 - (\lambda -iN\omega)^\alpha I & \cdots & J_{-2N}\\
				\vdots & \ddots & \vdots \\
				J_{2N} & \cdots & J_0 - (\lambda +iN\omega)^\alpha I
			\end{pmatrix},\\
			\pVec_N^\top &= \begin{pmatrix}
				p_{-N}^\top &\dots &p_{N}^\top
			\end{pmatrix},
		\end{align}where $\hMat_{N}^{(\alpha)}(\lambda)\in \mathbb C^{n(2N+1)\times n(2N+1)}$ and $\pVec_N\in \mathbb C^{n(2N+1)}$. One may find a $\lambda$ with non-negative real part for which the matrix $\hMat_N^{(\alpha)}(\lambda)$ becomes singular, i.e.
		\begin{align}\label{eq_fractionalHillProblem}
			0 = \det \hMat_{N}^{(\alpha)}(\lambda).
		\end{align} This truncated fractional Hill problem then allows for computation of $\pVec_N$ as an element of the nullspace of $\hMat_{N}^{(\alpha)}(\lambda)$ and thus for construction of the solution \eqref{eq_solutionFloquetForm}. The equivialent formulation
		\begin{equation}
		\hMat_{N}^{(\alpha)}(\lambda) \pVec_N = 0
		\end{equation}
		belongs to the class of Nonlinear Eigenvalue Problems (NEPs)  described by \cite{Guettel2017}
		\begin{equation}
		M(z)v=0,
		\end{equation}
		where the eigenvalue $z$ may appear nonlinearly, but the corresponding eigenvector $v$ linearly in the equation. Numerical treatment of these problems is extensively discussed in \cite{Guettel2017}.
	\end{remark}}

{	
	\begin{remark}
		In the case $\alpha =1$, $\lambda$ is a linear eigenvalue of the Hill matrix. Indeed, $\hMat_N^{(1)}(\lambda) = \hMat_N -\lambda I$ where $\hMat_N = \hMat_N^{(1)}(0)$ is the classical Hill-matrix \cite{Bayer2023a}.
	\end{remark}}
Before providing example systems and solving their respective fractional Hill problem numerically, we collect some properties of the truncated fractional Hill problem. In particular, we state results on the location of the eigenvalues $\lambda$. Also, Remark \ref{rmk_negativeFloquetExponents} remains true for the truncated problem.
{
		\begin{proposition}\label{prop_ComplexConj}
			The solutions of the truncated fractional Hill problem \eqref{eq_fractionalHillProblem} come in complex conjugate pairs $(\lambda,\bar\lambda)$. 
		\end{proposition}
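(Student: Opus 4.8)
The plan is to exploit two structural facts. First, since $J(t)$ is a real matrix function, its Fourier coefficients satisfy the symmetry $J_{-k} = \overline{J_k}$. Second, the dependence of $\hMat_N^{(\alpha)}(\lambda)$ on $\lambda$ enters only through the principal-value powers $(\lambda + ij\omega)^\alpha$ on the diagonal blocks. Writing the block in block-row $j$ and block-column $l$, with $j,l$ ranging over $-N,\dots,N$, in the unified form $J_{j-l} - \delta_{jl}(\lambda + ij\omega)^\alpha I$, I would show that complex conjugation combined with a reversal of the block ordering maps $\hMat_N^{(\alpha)}(\bar\lambda)$ back onto $\hMat_N^{(\alpha)}(\lambda)$, which forces the two determinants to be complex conjugates of one another.

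Concretely, I would first establish the conjugation identity $\overline{w^\alpha} = (\bar w)^\alpha$ for the principal value \eqref{eq_principalValue_fcn} whenever $\operatorname{Re} w \geq 0$; this holds because $|\bar w| = |w|$ and $\arg(\bar w) = -\arg(w)$ away from the negative real axis, and for a solution with $\operatorname{Re}\lambda \geq 0$ the relevant arguments $w = \bar\lambda + ij\omega$ satisfy $\operatorname{Re} w = \operatorname{Re}\lambda \geq 0$ and hence never lie on the branch cut. Combining this with $\overline{J_{j-l}} = J_{l-j}$, the $(j,l)$ block of $\overline{\hMat_N^{(\alpha)}(\bar\lambda)}$ becomes $J_{l-j} - \delta_{jl}(\lambda - ij\omega)^\alpha I$.

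Next I would introduce the block reversal permutation $P$ associated with the index flip $j \mapsto -j$, for which conjugation satisfies $(P^\top M P)_{jl} = M_{-j,-l}$. Substituting $M = \overline{\hMat_N^{(\alpha)}(\bar\lambda)}$ turns the $(-j,-l)$ block into $J_{j-l} - \delta_{jl}(\lambda + ij\omega)^\alpha I$, which is precisely the $(j,l)$ block of $\hMat_N^{(\alpha)}(\lambda)$. Hence $P^\top \overline{\hMat_N^{(\alpha)}(\bar\lambda)}\, P = \hMat_N^{(\alpha)}(\lambda)$. Since $P$ is a permutation matrix, this similarity leaves the determinant unchanged, giving $\det \hMat_N^{(\alpha)}(\lambda) = \overline{\det \hMat_N^{(\alpha)}(\bar\lambda)}$. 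Because the complex conjugate of $0$ is $0$, it follows that $\det \hMat_N^{(\alpha)}(\lambda) = 0$ if and only if $\det \hMat_N^{(\alpha)}(\bar\lambda) = 0$, which is the claimed pairing.

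The step I expect to require the most care is the branch-cut bookkeeping for the principal value: the identity $\overline{w^\alpha} = (\bar w)^\alpha$ fails exactly on the negative real axis, so I must verify that no diagonal-block argument $\bar\lambda + ij\omega$ can land there under the standing assumption $\operatorname{Re}\lambda \geq 0$. Once the unified block formula and the coefficient symmetry $J_{-k} = \overline{J_k}$ are in place, the reversal-permutation computation and the determinant invariance are routine.
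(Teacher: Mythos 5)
Your argument is correct and follows essentially the same route as the paper's proof: both rely on the coefficient symmetry $\overline{J_k} = J_{-k}$, the commutation of the principal power with conjugation, and a blockwise reversal of rows and columns of $\overline{\hMat_N^{(\alpha)}(\bar\lambda)}$ to recover $\hMat_N^{(\alpha)}(\lambda)$ (the paper phrases the conclusion via equality of ranks rather than via the determinant identity $\det \hMat_N^{(\alpha)}(\lambda) = \overline{\det \hMat_N^{(\alpha)}(\bar\lambda)}$, which is a cosmetic difference). Your explicit branch-cut check for $\overline{w^\alpha} = (\bar w)^\alpha$ under $\operatorname{Re}\lambda \geq 0$ is a point the paper glosses over and is a welcome addition.
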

		\begin{proof}
			First note two things:
			\begin{enumerate}
				\item Since $J(t)$ is real, its Fourier coefficients \eqref{eq_FracHillFourierForm} satisfy $\overline{J_k} = J_{-k}$ for all $k$, where $\overline{(\cdot)}$ denotes the element-wise complex conjugate.
				\item The principal root function $\lambda^\alpha = |\lambda|^\alpha \exp(i\alpha \arg \lambda)$ commutes with complex conjugation, i.e. $\overline{(\lambda^\alpha)} = (\overline \lambda)^\alpha$. 
			\end{enumerate}
			For a quadratic matrix $M$, it holds that $\operatorname{rank}M = \operatorname{rank}\overline{M}$ and we may take $M = \hMat_{N}^\alpha(\overline\lambda)$
			\begin{align}
				\operatorname{rank} \hMat_{N}^\alpha(\overline\lambda) = \operatorname{rank}\overline{\hMat_{N}^\alpha(\overline\lambda)}
			\end{align}
			where the latter matrix -- using the two statements above -- reads
			\begin{align}\label{eq_proofComplexConjMat1}
				\overline{\hMat_{N}^\alpha(\overline\lambda)} = \begin{pmatrix}
					J_0 - (\lambda +iN\omega)^\alpha I & \cdots & J_{2N}\\
					\vdots & \ddots & \vdots \\
					J_{-2N} & \cdots & J_0 - (\lambda -iN\omega)^\alpha I
				\end{pmatrix}.
			\end{align} Swapping columns and rows blockwise, we see that matrix \eqref{eq_proofComplexConjMat1} has the same rank as $\hMat_{N}^{(\alpha)}(\lambda)$, completing the proof.
		\end{proof}
		\begin{proposition}[Gershgorin Regions]\label{prop_Gersh}
			The eigenvalues $\lambda$ of $\hMat_{N}^{(\alpha)}(\lambda)$ are located in the union of balls centered on the imaginary axis, with their centers spaced by $i\omega$ and radii dependent on the Fourier coefficients $J_k$. See Figure \ref{fig_Gershgorin_FracHill}.
		\end{proposition}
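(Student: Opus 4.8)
The plan is to reduce the claim to the classical Gershgorin circle theorem applied to the fixed matrix $\hMat_N^{(\alpha)}(\lambda)$ for a candidate value of $\lambda$. By \eqref{eq_fractionalHillProblem}, $\lambda$ solves the nonlinear eigenvalue problem precisely when $\hMat_N^{(\alpha)}(\lambda)$ is singular, i.e. when $0$ is an ordinary eigenvalue of the $n(2N+1)\times n(2N+1)$ matrix $\hMat_N^{(\alpha)}(\lambda)$. Hence $0$ must lie in the union of the scalar Gershgorin discs of that matrix, and the whole argument amounts to writing those discs down explicitly and then translating the resulting inclusion back into a statement about $\lambda$.

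First I would index the scalar rows by a pair $(k,l)$, where $k\in\{-N,\dots,N\}$ labels the block row and $l\in\{1,\dots,n\}$ the component inside the block. Because the term $-(\lambda+ik\omega)^\alpha I$ in the $k$-th diagonal block is a scalar multiple of the identity, the diagonal scalar entry of row $(k,l)$ is $(J_0)_{ll}-(\lambda+ik\omega)^\alpha$, while every off-diagonal entry is either an off-diagonal entry of $J_0$ or an entry of some $J_{k-j}$ with $j\neq k$. The Gershgorin radius of that row is therefore
\begin{align}
R_{k,l} = \sum_{m\neq l}\bigl|(J_0)_{lm}\bigr| + \sum_{\substack{j=-N\\ j\neq k}}^{N}\sum_{m=1}^{n}\bigl|(J_{k-j})_{lm}\bigr|,
\end{align}
which depends only on the Fourier coefficients $J_k$, as asserted. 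Since $0$ lies in at least one disc, there exists a pair $(k,l)$ with
\begin{align}
\bigl|(\lambda+ik\omega)^\alpha-(J_0)_{ll}\bigr|\le R_{k,l}.
\end{align}

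It then remains to transport this inclusion from the $w=(\lambda+ik\omega)^\alpha$ plane back to the $\lambda$ plane. Inverting the principal power map $s\mapsto s^\alpha$ shows that $\lambda+ik\omega$ lies in the preimage of the disc $\{w:|w-(J_0)_{ll}|\le R_{k,l}\}$, so that
\begin{align}
\lambda \in \bigl\{\, w^{1/\alpha} : |w-(J_0)_{ll}|\le R_{k,l}\,\bigr\} - ik\omega.
\end{align}
As $k$ ranges over $-N,\dots,N$, the additive shift $-ik\omega$ marches the centers of these regions along a vertical line in steps of $i\omega$, which is exactly the spacing asserted in the statement; taking the union over all $(k,l)$ then yields the claimed Gershgorin region and the picture of Figure~\ref{fig_Gershgorin_FracHill}.

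The step I expect to be the main obstacle is precisely this last translation. Because $s\mapsto s^\alpha$ with $\alpha\in(0,1)$ is not affine, the preimage of a round disc is a distorted, generally non-circular set, so the regions above are only approximately the balls named in the statement; a fully rigorous version would have to either bound these preimages by genuine discs (e.g. via the local Lipschitz behaviour of $w\mapsto w^{1/\alpha}$ away from the origin) or else reinterpret the word \emph{ball} in the $(\lambda+ik\omega)^\alpha$ variable. One must also keep track of the principal branch and of the restriction $\operatorname{Re}\lambda\ge 0$ under which the ansatz \eqref{eq_solutionFloquetForm} and the Liouville-Weyl derivative are defined, discarding preimages falling outside the admissible sheet. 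Finally, the centers lie on the imaginary axis exactly when $(J_0)_{ll}$ is purely imaginary — in particular in the zero-mean case $\operatorname{diag}J_0=0$ — and otherwise sit on the parallel vertical line through $\operatorname{Re}\bigl((J_0)_{ll}^{1/\alpha}\bigr)$, so the phrase ``centered on the imaginary axis'' should be read as describing this generic, symmetric situation.
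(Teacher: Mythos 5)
Your overall strategy (a Gershgorin-type localization for the nonlinear eigenvalue problem) is the right one, but as you yourself concede at the end, the argument as written does not deliver the statement: by keeping $(J_0)_{ll}$ as the centre of the disc in the $w=(\lambda+ik\omega)^\alpha$ plane, you are forced to pull back a disc that is \emph{not} centred at the origin through the map $w\mapsto w^{1/\alpha}$, which yields distorted, non-circular regions whose centres need not lie on the imaginary axis. The paper's sketch avoids this entirely by choosing a different split: it invokes the Gershgorin theorem for nonlinear eigenvalue problems \cite[Thm 2.19]{Guettel2017} with $\hMat_{N}^{(\alpha)}(\lambda) = D(\lambda) + R$, where $D(\lambda)$ collects \emph{only} the diagonal functions $-(\lambda+ik\omega)^\alpha$ and the constant remainder $R$ contains all Fourier coefficients, \emph{including the diagonal of $J_0$}. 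The resulting inclusion is $|(\lambda+ik\omega)^\alpha|\le r_k$ with $r_k=\sum_{j\in\mathcal K_k}|J_j|$, and since the principal power satisfies $|s^\alpha|=|s|^\alpha$ exactly, this is equivalent to the genuine ball $|\lambda+ik\omega|\le r_k^{1/\alpha}$ centred at $-ik\omega$ on the imaginary axis. No approximation of preimages is needed, and the $2N+1$ centres are spaced by $i\omega$ as claimed.

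Your version is easily repaired along the same lines: from your inequality $\bigl|(\lambda+ik\omega)^\alpha-(J_0)_{ll}\bigr|\le R_{k,l}$ a single triangle inequality gives $\bigl|(\lambda+ik\omega)^\alpha\bigr|\le R_{k,l}+|(J_0)_{ll}|=:r_k$, i.e. you absorb the diagonal entry of $J_0$ into the radius rather than the centre, after which the inversion of the power map is exact and the regions are true balls. Two further small points: the restriction $\operatorname{Re}\lambda\ge 0$ plays no role here (the localization is a statement about all roots of $\det\hMat_N^{(\alpha)}$, and the principal power is defined on all of $\mathbb C$), and your closing caveat about the centres sitting on a line through $\operatorname{Re}\bigl((J_0)_{ll}^{1/\alpha}\bigr)$ becomes moot once $J_0$ is moved into the radius.
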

		\begin{proof}[Sketch of Proof]
			For the sake of simplicity, consider the scalar case $n=1$. We apply \cite[Thm 2.19]{Guettel2017} noting that the fractional Hill matrix
			\begin{align}
				\hMat_{N}^{(\alpha)}(\lambda) = D(\lambda) + R
			\end{align}
			can be split into a diagonal matrix $D(\lambda)$ and a remainder $R = const$. For the center row, we obtain
			\begin{align}
				|\lambda|^\alpha \leq \sum_{k\in \mathcal K_0} |J_k| \eqqcolon r_0\quad \Rightarrow\quad |\lambda|\leq r_0^{1/\alpha}
			\end{align}
			for some appropriate set $\mathcal K_0 \subset \{-2N,\dots ,+2N\}$, and find the first ball. Consider the upper neighbor of the center row, which yields the ball
			\begin{align}
				|\lambda-i\omega|\leq r_{-1}^{1/\alpha}
			\end{align}
			shifted from the origin and with $r_{-1} = \sum_{k\in \mathcal K_{-1}} |J_k|$ for an appropriate subset $\mathcal{K}_{-1} \subset \{-2N,\dots ,+2N\}$. The other balls follow analogously, producing a number of $2N+1$ balls.
		\end{proof}
		\begin{figure}[ht]
			\centering
			\includegraphics[width=6cm]{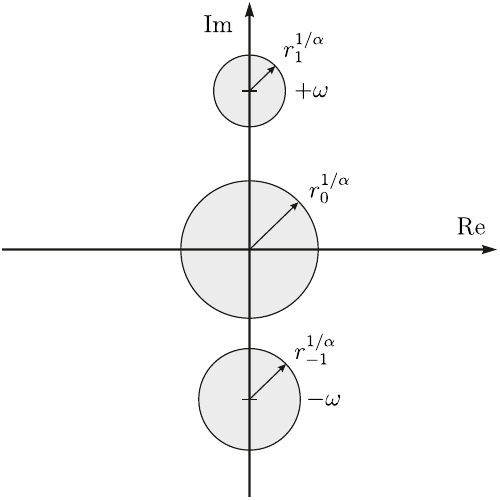}
			\caption{Location of eigenvalues $\lambda$ of $\hMat_{N}^{(\alpha)}(\lambda)$ in the complex plane.}
			\label{fig_Gershgorin_FracHill}
		\end{figure}}
		
		{
			\begin{example}[Scalar System]\label{ex_scalar}
			As an example of system \eqref{eq_FracHillDiffEqn}, consider $J(t) = a +b \sin (t)$ with real scalars $a$ and $b$. That is, we are interested in solutions of the IVP
			\begin{subequations}\label{eq_exampleSys}
					\begin{align}
							\CderivativeInput{\alpha}{-\infty}y(t) &= (a +b \sin (t))y(t),&& t>0\\
							y(s) &= y_0(s), && s\leq 0.
						\end{align}
				\end{subequations}
		\end{example}
		There is no analytical reference solution available and one has to resort to numerical solutions. One observes stable or unstable behavior of the origin $y\equiv 0$ in Figure \ref{fig_simulationExample}, depending on the choice of the parameters $a$ and $b$. Notice that the first-order case $\alpha =1$ of \eqref{eq_exampleSys} has an analytical solution obtained by variation of constants, whose asymptotic properties are solely dependent on parameter $a$.
        \begin{figure}[ht]
			\centering
			\begin{subfigure}[b]{0.45\textwidth}
				\centering
				\includegraphics[width=\textwidth]{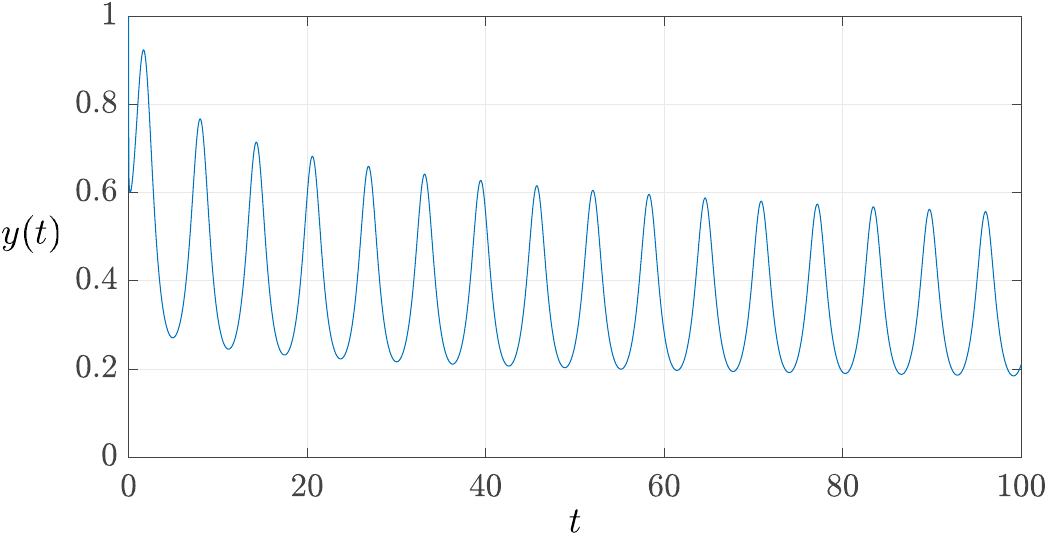}
				\caption{$b=1$.}
			\end{subfigure}
			\hfill
			\begin{subfigure}[b]{0.45\textwidth}
				\centering
				\includegraphics[width=\textwidth]{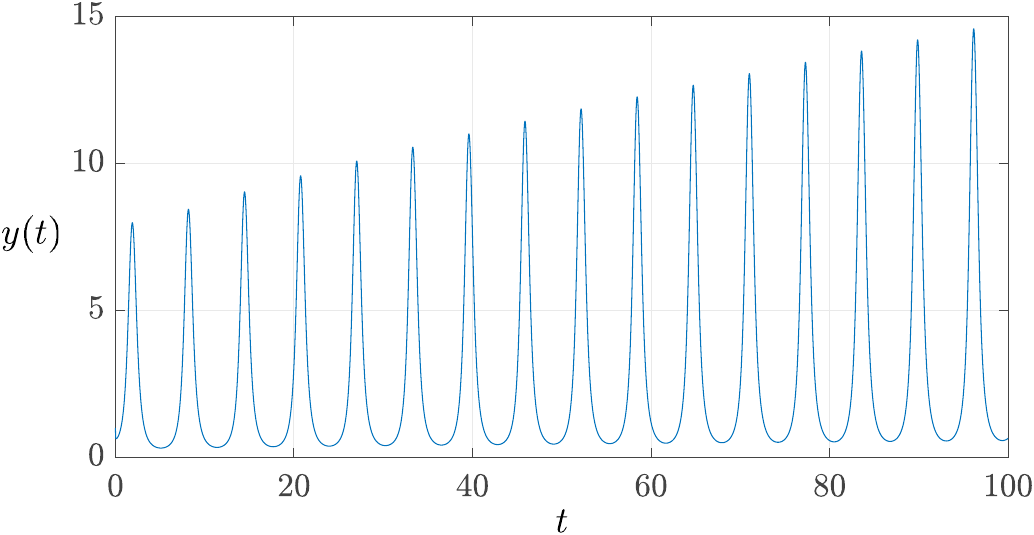}
				\caption{$b=2$.}
			\end{subfigure}
			\caption{Numerical solutions of Example \ref{ex_scalar} with constant initial condition $y_0(t) = 1$ for $t\leq 0$, $\alpha = 1/2$, $a =-1$ and varying parameter $b$. We used Garappa's implementation \texttt{fde12.m} in Matlab of the FracPECE scheme \cite{diethelm1998fracpece}.}
			\label{fig_simulationExample}
		\end{figure}
        }
		
		{ Having established a numerical reference solution, we now investigate the behavior of the fractional Hill matrix $\hMat_{N}^{(\alpha)}$. The Fourier coefficients of $J(t)$ of \eqref{eq_exampleSys} are given as
				\begin{align}
						J_0 = a,\quad J_1 = -J_{-1} = -ib/2,\quad J_k =0\quad \mathrm{for}~k\neq -1,0,1.
					\end{align} Figure \ref{fig_SimVsRes_Res} illustrates, that systems with a stable or unstable numerical reference solution do not or do admit zeros of $\det \hMat_{N}^{(\alpha)}(\lambda)$, respectively.}
		
		Figure \ref{fig_FloquetExp} shows the result of a grid zero search of $\det \hMat_{N}^{(\alpha)}$ for a fixed parameter set using Matlab's function \texttt{fzero}. The instabilty of the origin $y\equiv 0$ seen in reference numerical time-integration solutions of Figure \ref{fig_simulationExample} is confirmed by finding zeros $\lambda$ with positive real part. Moreover, the solutions $\lambda$ seem to be located on a line parallel to the imaginary axis, spaced roughly in the imaginary direction by the systems angular frequency $\omega =1$, confirming the statement of Remark \ref{rem_GroupOfLambda}.
		}
		
		{Having found an eigenvalue of $\hMat_{N}^\alpha$, one may also compute its corresponding eigenvector $\pVec_N$ and construct the trajectory $\HillSoln$ as \eqref{eq_solutionFloquetForm}. In order to confirm the computed trajectory, one may take it as an initial condition $y_0(s) = \HillSoln(s)$ for $s\leq 0$ of \eqref{eq_exampleSys} and numerically compute its trajectory $\simSoln$ using an integration scheme for the differential equation. The result is shown in Figure \ref{fig_TimeStepingVerification_HillVsNum}.}
		\begin{figure}[ht]
			\centering
			\begin{subfigure}[b]{0.45\textwidth}
				\centering
				\includegraphics[width=\textwidth]{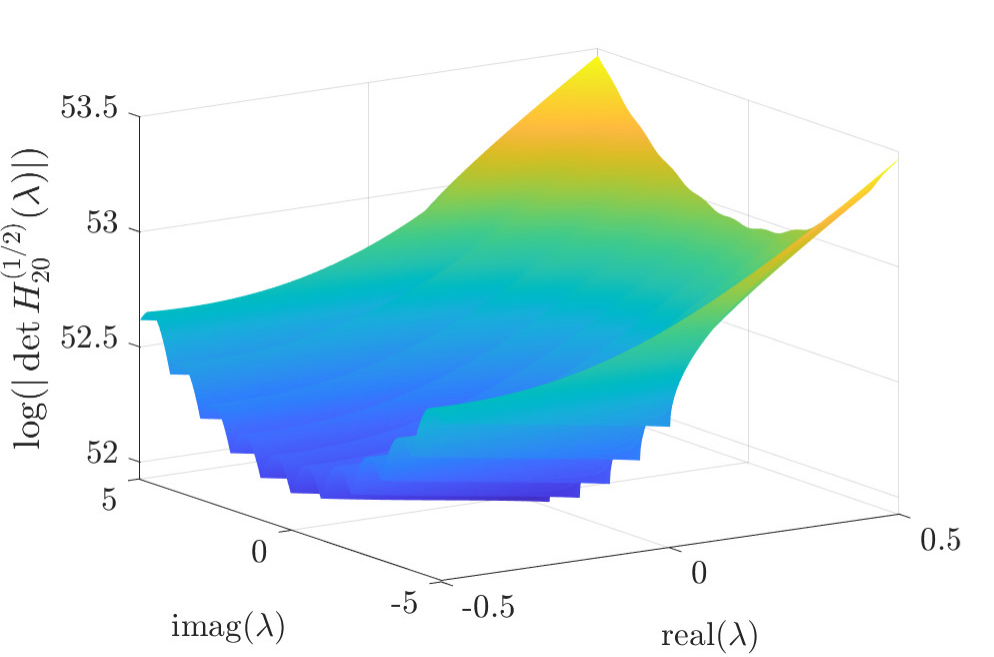}
				\caption{$b=1$.}
				\label{fig_SimVsRes_Res_b1}
			\end{subfigure}
			\hfill
			\begin{subfigure}[b]{0.45\textwidth}
				\centering
				\includegraphics[width=\textwidth]{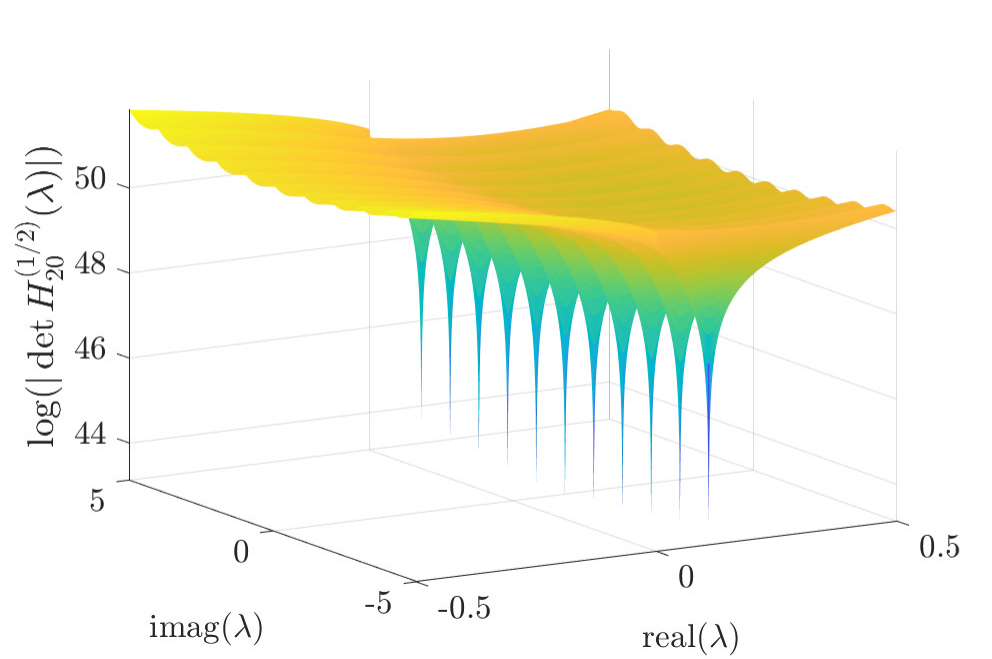}
				\caption{$b=2.5$.}
				\label{fig_SimVsRes_Res_b2p5}
			\end{subfigure}
			\caption{Logarithm of the determinant of the fractional Hill matrix $\hMat_N^{(\alpha)}$ of the system \eqref{eq_exampleSys} with $\alpha = 1/2$, $a =-1$, $N=20$ and varying parameter $b$. See Figure~\ref{fig_simulationExample} for the respective numerical time integration solutions.}
			\label{fig_SimVsRes_Res}
		\end{figure}
		\begin{figure}[ht]
			\centering
            \includegraphics[]{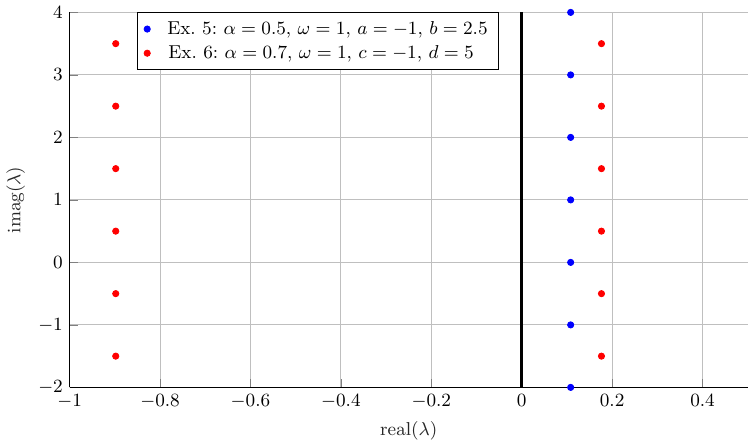}
			\caption{Numerically found solutions $\lambda$ of $\det \hMat_N^{(\alpha)}(\lambda) = 0$ of Examples \ref{ex_scalar} and \ref{ex_mathieu}.}
			\label{fig_FloquetExp}
		\end{figure}
		{
			\begin{figure}[ht]
				\centering
				\includegraphics[width = .6\textwidth]{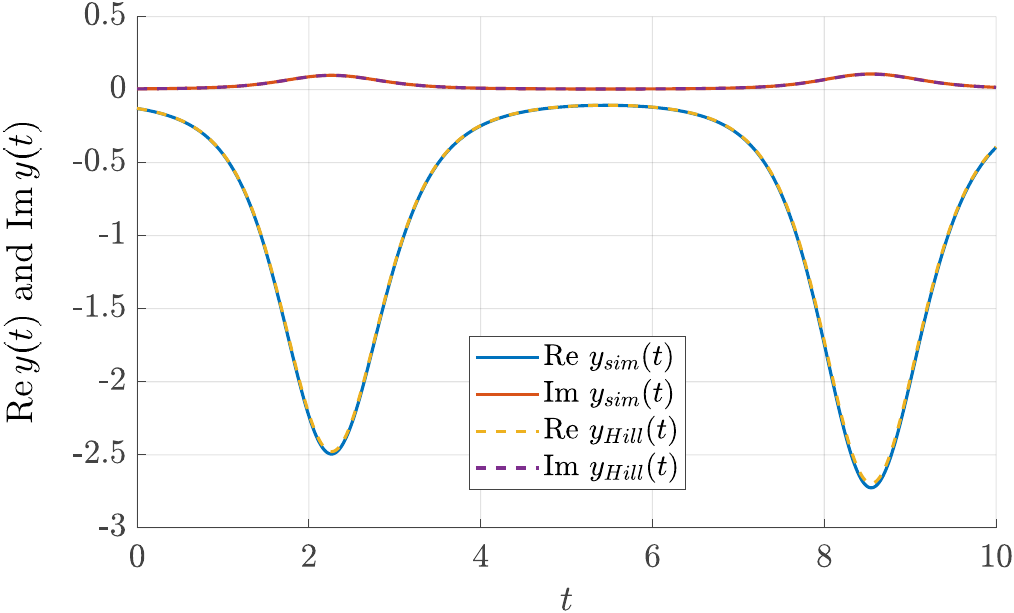}
				\caption{ Results of numerical time integration $\simSoln(t)$ of the system \eqref{eq_exampleSys} with $\alpha = 1/2$, $a =-1$, $N=10$ and $b=2.2$, compared with the solution obtained by the Hill method $\HillSoln(t)$. We used the numerical scheme of \cite{Chaudhary2024} implemented by Afshin Farhadi for problems of the form $\CderivativeInput{\alpha}{0}x = f(t,x)$ to initial condition $x(0) = x_0$. We implemented the infinite-memory initial condition $y_0$ as a forcing term on the right-hand side. }
				\label{fig_TimeStepingVerification_HillVsNum}
		\end{figure}}
		\begin{example}[Fractional Mathieu type system]\label{ex_mathieu}
			Consider the $n=2$-dimensional example system matrix of \eqref{eq_FracHillDiffEqn}
			\begin{align} \label{eq_FracMathieu}
				J(t) = \begin{bmatrix}
					0~~& 1\\
					c+d\sin(t)~~& 0
				\end{bmatrix}
			\end{align}
			with real parameters $c,d$ and Fourier coefficients
			\begin{align}
				J_0 = \begin{bmatrix}
					0~~ & 1\\
					c~~ & 0
				\end{bmatrix},\quad
				J_{-1} = \begin{bmatrix}
				0~~ & 0\\
				\frac{id}{2}~~ & 0
				\end{bmatrix}=\bar{J_1},\quad J_k = 0,\quad |k|\geq 2.
			\end{align}
			Results of numerical eigenvalue search of the truncated fractional Hill matrix are shown in Figure \ref{fig_FloquetExp}. We identify two groups of eigenvalues, one of which posesses a positive real part, indicating the existence of a diverging solution. For the group with negative real parts, one cannot find a respective converging solution. The existence of these eigenvalues confirms the statement of Remark \ref{rmk_negativeFloquetExponents}.
		\end{example}
		\section{Conclusion and Open Questions}\label{sec_outlook}
		{The properties of ansatz functions of both LTI and LTP fractional-order systems have been studied in the previous sections. In Table \ref{tab_summary}, we summarize the presented results on the two system types. For comparison, we also provide the case of first-order systems ($\alpha = 1$). As discussed in Section \ref{sec_LTI}, there is a clear relationship between the fractional-order LTI systems with zero and infinite lower bound $t_\mathrm L$, i.e. Captuto and Liouville-Weyl differential equations, respectively. It is possible to map all diverging solutions via an exponential ansatz, as stated in Theorem \ref{thm_chareqn}. For decaying solutions, the available results on Caputo systems help to better understand Liouville-Weyl systems. Indeed, using the fading memory property of Lemma \ref{lemma_forcing} as well as the variation of constants formula \eqref{eq_IVPsolution} allows for the analysis of the decaying solutions in Theorem \ref{thm_Forcing} and Corollary \ref{corollary_scalarLTI_FODE}. This analysis was performed in the scalar real negative case. Future research may try to extend this to the general complex case, where the exponential ansatz does not yield a conclusion in Theorem \ref{thm_chareqn}, i.e. cases \ref{case_invalid_soln} and \ref{case_noSoln}. 
			
		For the LTP case, a first-order differential equation classically yields the infinite-dimensional Hill eigenvalue problem. Caputo differential equations have no known form of solutions in the LTP case. This means that one cannot proceed as for LTI systems, that is, use knowledge on Caputo differential equations to investigate the Liouville-Weyl case. Consequently, the exponential ansatz and resulting nonlinear eigenvalue problem of Theorem \ref{thm_fracHill} yield a novel path to accessing solutions of LTP fractional differential equations. Therein, we observe the same limitation of the exponential ansatz as for LTI systems. One can only construct exponentially growing solutions, but not the algebraically decaying solutions. Besides the infinite-dimensional eigenvalue problem, we proved properties of the truncated eigenvalue problem in Propositions \ref{prop_ComplexConj} and \ref{prop_Gersh}. The proposed fractional Hill method enables stability analysis, whereas already established transition curve analysis is limited to finding periodic solutions, but gives no information on stability changes and bifurcations in parameter dependence.
		
		We close this paper with the following open questions: 
			\begin{enumerate}
				\item Is the absence of eigenvalues with non-negative real part, i.e. $\{\lambda: \mathrm{Re}\lambda \geq 0\}$, of the fractional Hill eigenvalue problem \eqref{eq_fracHilleqn} conclusive for stablity? Practice has shown, that stable systems either admit eigenvalues in the left half plane only or none at all. This suggests, that the fractional Hill method may be conclusive also for stability.
				\item How are the stability properties of the zero- and infinite-lower bound system related? Since the two system types only differ through the forcing term $\mathcal F x_0(t)$, which satisfies the algebraic decay property of Lemma \ref{lemma_forcing}, it may be possible to generalize the statements of Theorem \ref{thm_Forcing}. However, this does not extend to the most general case of nonlinear systems, as the forcing $\mathcal F x_0(t)$ may be arbitrarily large and potentially drive the solutions outside of regions of attraction, where the Caputo system would stay. 
				\item In the case of LTP-FODEs, how can one construct a monodromy operator and is it possible to find Floquet multipliers? Since the history of trajectories is a part of FODEs, we conjecture that a monodromy operator $\Phi_T:X(-\infty,0)\to X(-\infty,T)$ is required for the study of LTP systems analogous to the case of functional differential equations \cite{Hale1977}. The spectrum of this operator will map the stability properties of the respective system. Consequently, the study of fundamental solution operators for LTP-FODEs is a neccessary first step.
			\end{enumerate}}{ }
			\begin{table}[ht]
				\centering
				\caption{ Linear Differential Equations of the form $\CderivativeInput{\alpha}{t_\mathrm L} \zeta= A(t)\zeta$, where $A(t)=const$ and $A(t) = A(t+T)$ periodic yields a linear time-invariant (LTI) and linear time-periodic (LTP) type, respectively. Every Ansatz solution $\zeta(t)$ yields a corresponding eigenvalue problem (EVP), where $\rho$ may be restricted. }\vspace{.15cm}
					\begin{tabular}{r|ccc|ccc}\hline
					System& {LTI} &{LTI} &{LTI}  & {LTP}&{LTP} &{LTP}                            \\ \hline
					order $\alpha$	& {1} & {$(0,1)$} & {$(0,1)$}   & {1} & {$(0,1)$} & {$(0,1)$}    \\
					$t_\mathrm L$	& {-} & {$0$} & $-\infty$ & {-} & {$\quad 0\quad $} &  $-\infty$\\ 
					$\zeta(t)$& {$\exp(\rho t)v$} & {$E_\alpha(\rho t^\alpha)v$} &$\exp(\rho t)v$  & {$\exp(\rho t)p(t)$} & {?} & $\exp(\rho t)p(t)$ \\ 
					EVP& {$Av = \rho v$} & {$Av = \rho v$}                         &$Av=\rho^\alpha v$  & {$\hMat_\infty v = \rho v$} & {-} & $\hMat_\infty^{(\alpha)}(\rho)v = 0$ \\ 
					$\operatorname{dom}\rho$ 	& {$\rho \in \mathbb C$} &{$\rho \in \mathbb C$}                         & $\operatorname{Re} \rho \geq 0$ & {$\rho \in \mathbb C$} & {-} & $\operatorname{Re} \rho \geq 0$ \\ \hline
					\end{tabular}\label{tab_summary}
			\end{table}
						
			\newpage
			\section*{Acknowledgements}{This work is supported by the Federal Ministry of Research, Technology and Space of Germany under Grant No. 05M22VSB and 05M22WHA.}
			%
			%

\bibliographystyle{unsrt}  
\bibliography{literature}   

			\newpage
			\appendix
            \section{Proof of Lemma \ref{lemma_forcing}}
            \label{appendix}
\begin{proof}[Proof of Lemma \ref{lemma_forcing}]
	Assume w.l.o.g. $t_0 = 0$. We first show that that $\mathcal F x_0 \in C[0,\infty)$. In other words, we must demonstrate that, for all $t \in [0, \infty)$, $\mathcal F {x_0}(t+\epsC) \to \mathcal F {x_0}(t)$ as $\epsC \to 0$.

To this end, let $t \ge 0$ and $\epsC > 0$ w.l.o.g. . Then,
\begin{equation}
	\mathcal F {x_0}(t +\epsC) - \mathcal F {x_0}(t) = \frac 1 {\Gamma(1-\alpha)} (B_1 + B_2)
\end{equation}
where
\begin{align}
	B_1 & = \int_{- \eta}^{0} 
					\left[ (t + \epsC - \tau)^{-\alpha} - (t - \tau)^{-\alpha} \right] {x_0}'(\tau) \, \mathrm d \tau 
\intertext{and}
	B_2 &=  \int_{-\infty}^{- \eta} 
			\left[ (t + \epsC - \tau)^{-\alpha} - (t - \tau)^{-\alpha} \right]  {x_0}'(\tau) \, \mathrm d \tau
\end{align}
where $\eta > 0$ is the fixed number from the property that ${x_0}' \in L^\infty [-\eta,0]$.
Recalling that ${x_0}'$ is bounded on $[-\eta, 0]$, we see that
\begin{align}\label{eq_boundUpperPart}
	|B_1| & \le \sup_{\tau \in [-\eta, 0]} |{x_0}'(\tau)| \cdot \int_{- \eta}^{0} 
					\left| (t + \epsC - \tau)^{-\alpha} - (t - \tau)^{-\alpha} \right|  \, \mathrm d \tau \\
		& = \sup_{\tau \in [-\eta, 0]} |{x_0}'(\tau)| \frac 1 {1-\alpha} \left[ (t + \eta)^{1-\alpha} - (t + \epsC  + \eta)^{1-\alpha}
						- t^{1-\alpha} + (t + \epsC )^{1-\alpha} 
						 \right],
\end{align}
where we used $ (t + \epsC - \tau)^{-\alpha} - (t - \tau)^{-\alpha} < 0$ and therefore $B_1 \to 0$ for $\epsC \to 0$.
Moreover, an integration by parts{, justified by the absolute continuity of $x_0$},  yields that
\begin{equation}
	B_2 = B_{21} - B_{22} 
\end{equation}
where
\begin{align}
	\label{eq:b21}
	B_{21} &= \left. \left[ (t + \epsC - \tau)^{-\alpha} - (t - \tau)^{-\alpha} \right]^{\vphantom{b}}
				\cdot {x_0}(\tau) \right |_{\tau = -\infty}^{\tau = - \eta}
\intertext{and}
	\nonumber
	B_{22} &= -\alpha \int_{-\infty}^{-\eta} \left[ (t+\epsC - \tau)^{-1-\alpha} - (t - \tau)^{-1-\alpha} \right] 
						{x_0}(\tau) \, \mathrm d \tau.
\end{align}
Clearly, since ${x_0}$ is bounded and the term in square brackets in \eqref{eq:b21} tends to zero as $\tau \to -\infty$, we find that
\begin{equation}
	B_{21} = \left[ (t + \epsC  + \eta)^{-\alpha} - (t + \eta)^{-\alpha} \right]
				\cdot {x_0}(- \eta)
		\to 0
\end{equation}
for $\epsC \to 0$ since $t \ge 0$ and $\eta > 0$. Finally, once again using the boundedness of ${x_0}$ and applying the
mean value theorem of differential calculus to the function $(\cdot)^{-1-\alpha}$ on the interval $[t - \tau, t+\epsC - \tau]$, we see that
\begin{align}
	|B_{22}| 
	& \le \alpha \sup_{\tau \le - \eta} |{x_0}(\tau)| 
				 \int_{-\infty}^{-\eta} \left| (t+\epsC - \tau)^{-1-\alpha} - (t - \tau)^{-1-\alpha} \right| 
				 	\, \mathrm d \tau \\
	&= \alpha (\alpha + 1) \sup_{\tau \le - \eta} |{x_0}(\tau)| \cdot \epsC  
				 \int_{-\infty}^{-\eta} \left| \xi_{\tau, \epsC} \right|^{-2-\alpha} 
				 	\, \mathrm d \tau
\end{align}
where $\xi_{\tau, \epsC} \in [t - \tau, t+\epsC - \tau]$ and thus $\xi_{\tau, \epsC} \geq  t - \tau$. Therefore,
\begin{align}
	|B_{22}| & \le \epsC ~ \alpha (\alpha + 1) \sup_{\tau \le - \eta} |{x_0}(\tau)| \cdot 
			 \int_{-\infty}^{-\eta} (t - \tau)^{-2-\alpha} 
				 	\, \mathrm d \tau 
\end{align}
and so $B_{22} \to 0$ for $\epsC \to 0$ as well because the relations $\tau < - \eta < 0 \le t$ imply that
the integral in the last equation is finite. We thus have shown continuity.
    
    We next show the bound \eqref{eq_forcingBoundInftyNormNew}. First, we split the forcing term into the two integrals
	\begin{align}
		\mathcal Fx_0(t)\Gamma(1-\alpha) = I_1(t)+I_2(t),\\
		I_1(t)=\int_{-\eta}^{0}(t-\tau)^{-\alpha}x_0'(\tau)\dsomething{\tau},\quad I_2(t)=\int_{-\infty}^{-\eta}(t-\tau)^{-\alpha}x_0'(\tau)\dsomething{\tau},
	\end{align} where $[-\eta,0]$ is the interval on which $x_0'$ is bounded according to Definition \ref{def_InitialSpace}. We derive for the first integral the bound
	\begin{align}
		||I_1(t)||&\leq \int_{-\eta}^{0}(t-\tau)^{-\alpha}||x_0'(\tau)||\dsomething{\tau}
		\leq \left(\sup_{t\in [-\eta,0]}||x_0'(t)||\right)\int_{-\eta}^{0}(t-\tau)^{-\alpha}\dsomething{\tau}\\
		&=  \left(\sup_{t\in [-\eta,0]}||x_0'(t)||\right) \frac{1}{1-\alpha}\left((t+\eta)^{1-\alpha}-t^{1-\alpha}\right)
		\leq  \left(\sup_{t\in [-\eta,0]}||x_0'(t)||\right) \frac{\eta}{1-\alpha}(t+\eta)^{-\alpha}.\label{eq_I1bound}
	\end{align}
	The last step holds by the following argument. Let $\alpha\in(0,1)$ and $t\geq 0$. We want to show that
	\begin{align}
		(t+\eta)^{1-\alpha}-t^{1-\alpha}&\leq \eta(t+\eta)^{-\alpha}.\label{eq_AlgebraicBoundStart}
	\end{align}Dividing by the first term yields
	\begin{align}
		1-\left(\frac{t}{t+\eta}\right)^{1-\alpha} &\leq \frac{\eta}{t+\eta}. \label{eq_AlgebraicBound}
	\end{align} Introducing $\zeta = \frac{t}{t+\eta}\in [0,1)$ and noting $1-\zeta = \frac{\eta}{t+\eta}$, equation \eqref{eq_AlgebraicBound} may be reformulated to
	\begin{align}
		\zeta^{1-\alpha} \geq \zeta
	\end{align}which does hold since $\zeta\in[0,1)$. Thus, equation \eqref{eq_AlgebraicBoundStart} indeed holds. For the second integral we apply  integration by parts{, justified by the absolute continuity of $x_0$}, 
	\begin{align}\label{eq_intByParts}
		I_2(t) &= \int_{-\infty}^{-\eta}(t-\tau)^{-\alpha}x_0'(\tau)\dsomething{\tau}= \left[(t-\tau)^{-\alpha} x_0(\tau)\right]^{\tau = -\eta}_{\tau = -\infty}+\alpha\int_{-\infty}^{-\eta}(t-\tau)^{-\alpha-1}x_0(\tau)\dsomething{\tau}\\&= (t+\eta)^{-\alpha}x_0(-\eta)+\alpha\int_{-\infty}^{-\eta}(t-\tau)^{-\alpha-1}x_0(\tau)\dsomething{\tau}\label{eq_I2BoundIntermediate}
	\end{align} where the limit $\lim_{\tau\to-\infty}x_0(\tau)/(t-\tau)^\alpha$ vanishes for all $t\geq0$ since $||x_0||_\infty\leq \infty$. The absolute value of the integral may be bounded as follows:
	\begin{align}
		\alpha\left\lVert\int_{-\infty}^{-\eta}(t-\tau)^{-\alpha-1}x_0(\tau)\dsomething{\tau}\right\rVert\leq \alpha ||x_0||_\infty \int_{-\infty}^{-\eta}(t-\tau)^{-\alpha-1}\dsomething{\tau}=||x_0||_\infty (t+\eta)^{-\alpha} \label{eq_I3bound}
	\end{align}
	Using equations \eqref{eq_I2BoundIntermediate} and \eqref{eq_I3bound} and $|x_0(-\eta)|\leq||x_0||_\infty$, we may set up the bound
	\begin{align}
		||I_2(t)||\leq 2||x_0||_\infty (t+\eta)^{-\alpha}.\label{eq_I2bound}
	\end{align}From the triangle inequality $||\mathcal Fx_0(t)|| \Gamma(1-\alpha)\leq ||I_1(t)||+||I_2(t)||$ and the established bounds \eqref{eq_I1bound} and \eqref{eq_I2bound}, one arrives at the proposed \eqref{eq_forcingBoundInftyNormNew} and the proof is complete.
\end{proof}

		\end{document}